\documentclass[aps,pra,twocolumn,superscriptaddress,noshowpacs]{revtex4-2}
\usepackage[utf8]{inputenc}
\usepackage[T1]{fontenc}
\usepackage{graphicx}
\usepackage{xcolor}
\usepackage{physics}
\usepackage{bm}
\usepackage{amsfonts}
\usepackage{xurl}
\usepackage{nicefrac}
\usepackage{amsthm}
\usepackage{dsfont}
\usepackage{comment}
\usepackage[normalem]{ulem}

\newtheorem{theorem}{Theorem}[section]
\newtheorem{lemma}[theorem]{Lemma}
\newtheorem{observation}[theorem]{Observation}
\newtheorem{cor}[theorem]{Corollary}

\renewcommand{\ket}[1]{| #1 \rangle}
\renewcommand{\bra}[1]{\langle #1 |}
\renewcommand{\braket}[2]{\langle #1 | #2 \rangle}
\renewcommand{\ketbra}[2]{| #1 \rangle \langle #2 |}
\newcommand{\PP}{\mathcal{P}}
\newcommand{\rrp}{\mathfrak{r}_P}
\newcommand{\rr}{\mathfrak{r}}

\definecolor{dark-gray}{rgb}{.35,.55,.55}
\definecolor{dark-blue}{rgb}{.0,.0,.6}
\usepackage[colorlinks=true,linkcolor=dark-blue,citecolor=dark-blue,urlcolor=dark-blue]{hyperref}

\graphicspath{ {./pictures/} }

\begin{document}
\title{
Quantifying the dimensionality of multiparticle
entanglement via partition rank}

\author{Sophia Denker}
\email{sophia.denker@uni-siegen.de}
\affiliation{Naturwissenschaftlich-Technische Fakultät, Universität Siegen, Walter-Flex-Straße 3, 57068 Siegen, Germany}

\author{Ismaël Septembre}
\email{ismael.septembre@uni-siegen.de}
\affiliation{Naturwissenschaftlich-Technische Fakultät, Universität Siegen, Walter-Flex-Straße 3, 57068 Siegen, Germany}

\author{Robin Krebs}
\email{robin_benedikt.krebs@tu-darmstadt.de}
\affiliation{Department of Computer Science, Technische Universität Darmstadt, Germany}

\author{Otfried Gühne}
\email{otfried.guehne@uni-siegen.de}
\affiliation{Naturwissenschaftlich-Technische Fakultät, Universität Siegen, Walter-Flex-Straße 3, 57068 Siegen, Germany}

\date{\today}
\begin{abstract}
The usefulness of entanglement as a resource in quantum technologies increases for 
larger systems, that is, if more particles or higher-dimensional quantum systems 
are considered. Yet, the interplay between dimensionality and multiparticle entanglement
is not well understood. Only for two-particle systems an unambiguous and coherent 
notion of entanglement dimensionality, based on the Schmidt decomposition, is known. 
We introduce a concept to characterize the entanglement dimensionality of multiparticle 
states based on decompositions of pure states into superpositions of states without genuine
multiparticle entanglement. We provide constructive methods to characterize the resulting
partition rank for pure and mixed states. This allows the
identification of novel maximally correlated states as well as a discrete classification 
of quantum states under stochastic local operations and classical communication. From
a mathematical perspective, our approach can be formulated in terms of
the slice rank and 
partition rank of tensors, and our results allow to characterize these by connecting
them to a generalized injective tensor norm. 

\end{abstract}
\maketitle

%%%%%%%%%%%%%%%%%%%%%%%%%%%%%%%%%%%%%%%%%%%%%%%%%%%%%%%%%%%%%%%%
{\it Introduction.---} 
%%%%%%%%%%%%%%%%%%%%%%%%%%%%%%%%%%%%%%%%%%%%%%%%%%%%%%%%%%%%%%%%
Entanglement is a central phenomenon of quantum physics through which 
the states of spatially separated particles can be interdependent~\cite{Horodecki2009}. 
It leads to advantages in quantum metrology \cite{TothApellaniz2014,PezzeSmerzietal2018}
 or quantum communication~\cite{gisin2007quantum,luo2023recent}. In the latter task, 
the usage of high-dimensional systems instead of qubits leads to additional
advantages~\cite{cozzolino2019high,CerfBourennaneKarlsson2002}. 
Characterising entanglement for two qubits is straightforward and many 
criteria exist~\cite{ekert1995entangled, peres1996separability, HORODECKI19961, wootters2001entanglement, nielsen2010quantum}; some also generalise to 
qudits. Communication scenarios, however often involve more 
than two parties, which complicates entanglement
characterization~\cite{walter2016multipartite, GuehneToth2009}. For 
instance, already for three qubits, inequivalent forms of entanglement 
exist \cite{coffman2000distributed,dur2000three}, and the
situation becomes more complicated for higher dimensions~\cite{chitambar2010matrix}.
This leads to a central question: How can we characterize the dimensionality
of multiparticle quantum systems?

For two-particle systems, the Schmidt decomposition delivers a natural tool to characterize the dimensionality of entanglement
\cite{Terhal2000, SanperaBrussLewenstein2001}. This calls for a straightforward extension to the multiparticle
case: One can partition the particles into two groups and then 
consider each group as a high-dimensional quantum system to recover 
the standard bipartite configuration that is still tractable~\cite{nielsen2010quantum}. Since there are many ways to partition a multiparticle system, the Schmidt decomposition in each partition gives information about the global system, leading to a vector of bipartite dimension parameters~\cite{HuberdeVicente2013}. Then, if no partition is exempt from entanglement, the state is genuinely multipartite entangled, and the actual number of entangled levels can be quantitatively assessed~\cite{CobucciTavakoli2024}. 

Nevertheless, all these constructions rely on fixed bipartite decompositions, so one may doubt that they can capture the essence of multiparticle entanglement. A different way to
extend the gist of the Schmidt decomposition to multipartite
systems is to expand multiparticle states as superpositions of fully separable states \cite{EisertBriegel2001}. Still, this is 
analytically difficult \cite{haastad1990tensor,fawzi2022discovering} and, in addition, this does not discriminate between the notions of biseparability and genuine multiparticle entanglement.

In this work, we define a truly multipartite measure that
naturally unveils genuine multipartite entanglement properties 
of quantum states. It is based on the number of product terms of arbitrary partitions needed to decompose a pure state. Following the mathematical literature, which has started to investigate an analogous concept for tensors \cite{tao2016capset, tao2016slice, Naslund2020}, we call it the \textit{partition rank} of a quantum state.
We show that the partition rank reveals stark differences in entanglement properties invisible to earlier methods. For that
we present methods to study the partition rank in practice 
for pure and mixed states by reformulating it as a geometric distance minimization
problem~\cite{wei2003geometric,WeinbrennerGuehne2025}, enabling 
the use of methods such as see-saw algorithms and semi-definite 
programming. We then identify relevant quantum states which are maximally entangled with respect those measures. 

%%%%%%%%%%%%%%%%%%%%%%%%%%%%%%%%%%%%%%%%%%%%%%%%%%%%%%%%%
%%%%%%%%%%%%%%%%%%%%%%%%%%%%%%%%%%%%%%%%%%%%%%%%%%%%%%%%%

%%%%%%%%%%%%%%%%%%%%%%%%%%%%%%%%%%%%%%%%%%%%%%%%%%%%%%%
{\it Motivation and main idea.---}
%%%%%%%%%%%%%%%%%%%%%%%%%%%%%%%%%%%%%%%%%%%%%%%%%%%%%%%
To start, recall that for a pure bipartite state $\ket{\psi}$ 
the Schmidt decomposition is given by~\cite{schmidt1907theorie,nielsen2010quantum}
\begin{align}
    \ket{\psi} = \sum_{i=1}^{\rr}s_i\ket{a_i}_A \ket{b_i}_B,
\end{align}
where the minimal number $\rr$ of nonzero Schmidt coefficients $s_i$ 
is called the \textit{Schmidt rank}. This effectively counts 
how many product terms are needed to decompose a state and 
therefore quantifies the dimensionality of bipartite entanglement. Indeed, separable states have Schmidt rank 
one, while any state that is entangled in some level 
has a larger Schmidt rank. In the standard formulation
the Schmidt vectors $\ket{a_i}_A$ (and $\ket{b_i}_B$) form 
orthogonal bases and the Schmidt coefficients $s_i$ are real 
and positive. For later generalization, however, it is useful
to note that the requirements of orthogonality and positivity are
not essential --- dropping them leads to the same notion of a 
Schmidt rank $\rr$. 

How can this approach be generalized to multiparticle systems? 
First, in this case the entanglement structure becomes  more intricate~\cite{GuehneToth2009,Horodecki2009,walter2016multipartite}. For instance, for three particles one can distinguish 
between \textit{fully separable} states, which can be 
written as $\ket{a}\ket{b}\ket{c} \equiv\ket{abc}$; 
biseparable states, which are of the form $\ket{a}_A\ket{\phi}_{BC}$, $\ket{b}_B\ket{\phi}_{AC}$, or $\ket{c}_C\ket{\phi}_{AB}$, and states that do 
not have any product structure. The latter ones are called \textit{genuine multipartite entangled} (GME). 

Then, we can ask the same question as in the bipartite case: How many product terms are needed to decompose a multipartite state? Here, however, there are more product structures possible than only the partition $A|B$ corresponding to $\ket{ab}$. For three particles, we can identify a set of possible nontrivial partitions, e.g.,
$\mathcal{P} =\{A|B|C, A|BC, B|AC, C|AB\}$, where 
the first partition corresponds to a fully separable term and the others 
to biseparable terms. 
Then we can define:

\textit{Given a multiparticle state $\ket{\psi}$ and a 
set of partitions $\mathcal{P}$, one can consider decompositions
\begin{align}
    \ket{\psi} = \sum_{i=1}^{r_\mathcal{P}}
    \ket{\nu_i^P}
    \label{eq:partitionrank}
\end{align}
where the non-normalized states $\ket{\nu^P_i}$ are each separable with 
respect to at least one partition $P\in \mathcal{P}$. The smallest possible 
number of terms $r_\mathcal{P}$ in this decomposition is called the partition
rank (with respect to the set $\mathcal{P}$).} 

The partition rank quantifies the minimal dimension that is needed, if the state $\ket{\psi}$ should be embedded in a space spanned by vectors without genuine multipartite entanglement. Consequently, it can be considered as a measure for the dimensionality of multipartite entanglement.

Some remarks are in order. First, for many cases like the example set $\PP$ above, it is sufficient to consider bipartitions only, since any 
term with a finer product structure can just be assigned to a compatible bipartition. Explicitly, in the three-particle case, we can consider decompositions of the form $\ket{\psi} = \sum_i\ket{\varphi_i}_A\ket{\phi_i}_{BC}+\sum_{j>i}\ket{\varphi_j}_B\ket{\phi_j}_{AC}+\sum_{k>j}\ket{\varphi_k}_C\ket{\phi_k}_{AB}$. 
In fact, for the three-particle case the states
in Eq.~(\ref{eq:partitionrank}) can be chosen to be mutually orthogonal, see Appendix~\ref{app:orth}. This 
highlights the analogy to the bipartite Schmidt 
decomposition and the corresponding notions of dimensionality. 

Second, when restricting the set of allowed partitions $\mathcal{P}$ 
to the fully separable one, e.g., $A|B|C$ in the three-partite case, we 
recover the decompositions connected to the Schmidt measure~\cite{EisertBriegel2001}, while restricting it to one 
fixed bipartition, lets say $\mathcal{P}=\{A|BC\}$, corresponds 
to the Schmidt decomposition. The advantage of our approach lies
in the fact that the notion of partition rank can, contrary to 
the Schmidt measure, quantify genuine multipartite entanglement and, 
at the same time, goes beyond the consideration of a single bipartition.

Finally, a similar concepts have been studied in mathematics~\cite{KlinglerNetzerDelesCoves2025border}. There, the Schmidt measure corresponds to the notion of  tensor rank \cite{Landsberg2011TensorsGA}. 
If one considers only the bipartitions of one particle vs.~the rest, then one arrives at the notion of slice rank \cite{tao2016capset, tao2016slice}, and the general notion from Eq.~(\ref{eq:partitionrank}) has recently been studied independently as partition rank \cite{Naslund2020, karam2023smallsunflowersstructureslice, CohenMoshkovitz2023, BikDraismaLampertZiegler2025partiitonrank, Oneto2025review},
where we also borrow the
name from. In the following, we will see how the physical interpretation
helps to gain insights into these mathematical constructions.

%%%%%%%%%%%%%%%%%%%%%%%%%%%%%%%%%%%%%%%%%%%%%%%%%%%%%%%
{\it Relation to other measures.---}
%%%%%%%%%%%%%%%%%%%%%%%%%%%%%%%%%%%%%%%%%%%%%%%%%%%%%%%
To start, we show that our measure goes beyond the Schmidt-number vector and the GME-dimension known from the literature \cite{HuberdeVicente2013, CobucciTavakoli2024}. Consider the three-qutrit state
\begin{equation}
\label{eq:state2bisep}
\ket{\psi_2} = 
\frac{1}{\sqrt{2}}
\big( 
\ket{0}_A \ket{\Phi_{0,1}^+}_{BC}  
+ \ket{\Phi_{1,2}^+}_{AB} 
\ket{2}_C 
\big),
\end{equation}
where $\ket{\Phi _{j,k}^+} \propto \sum_{m=j}^k\ket{mm}$ denotes a 
maximally entangled state on the levels $j$ to $k$. By direct calculation, 
we obtain Schmidt rank $r=3$ for every bipartition, so the Schmidt-number
vector is $r_\mathrm{SN}=(3,3,3)$ and the GME-dimension as the minimum
of the bipartite Schmidt ranks is $d_\mathrm{GME}=3$. However, $\ket{\psi_2}$ 
is a sum of only two biseparable states, meaning that the partition rank is $r_\mathcal{P}=2$. 

Going further, the gap between the partition rank and the GME-dimension 
can be arbitrarily large. If one takes the three-qudit state
$\ket{\psi^d_2} =
(\ket 0_A \ket{\Phi _{0,d-1}^+}_{BC}  + \ket{\Phi _{1,d}^+}_{AB} \ket{2}_C 
)/{\sqrt{2}}$, its partition rank is $r_\mathcal{P}=2$ for all 
$d \geq 2$ but its GME-dimension is $d_\mathrm{GME}=d$.

In the other direction, since the GME-dimension is based on the Schmidt decomposition, which is a special case of the partiton rank decomposition, 
we observe: 

\textbf{Observation 1.}
\textit{For any pure state $\ket{\psi}$ the partition rank is a lower 
bound on the GME-dimension
\begin{align}
    r_\mathcal{P}\leq d_\mathrm{GME}.
\end{align}}

Indeed, there are many states with $d_\mathrm{GME} = r_\mathcal{P}$ 
such as for example the three-particle Greenberger-Horne-Zeilinger (GHZ)
state, which for qutrits is given by 
$\ket{\mathrm{GHZ}}=(\ket{000}+\ket{111}+\ket{222})/{\sqrt{3}}$. 

%%%%%%%%%%%%%%%%%%%%%%%%%%%%%%%%%%%%%%%%%%%%%%%%%%%%%%%%%%%
{\it Entanglement quantifier based on partition rank.---}
%%%%%%%%%%%%%%%%%%%%%%%%%%%%%%%%%%%%%%%%%%%%%%%%%%%%%%%%%%%
It is desirable to quantify how well a state can be approximated 
with a state of smaller partition rank $k<r_\mathcal{P}$. Similar 
to the concept of the geometric measure of entanglement 
\cite{wei2003geometric, WeinbrennerGuehne2025} or its 
generalizations \cite{SenDeSen20210gengm, Prabhuetal2012gengm, ZhuZhangZeng2024subspaceborderrank}, where the aim is to find 
the distance of a given state to the set of fully separable states, we 
can introduce an entanglement quantifier based on partition rank. 

\textit{Given a pure multipartite state $\ket{\psi}$ consider the distance
\begin{align}
\Omega_k^2(\psi) = \sup_{\ket{\eta} \in \mathbb{P}_k} |\braket{\eta}{\psi}|^2,
\end{align}
where $\mathbb{P}_k$ is the set of all states with partition rank 
$r_\PP \leq k$.  Then, we call $G_k(\psi)=1-\Omega_k^2(\psi)$ the 
geometric measure of partition rank $k$.}

Moreover, determining this overlap $\Omega_k^2$ allows us to 
construct witnesses for partition rank $k$ states. The logic 
is the same as in the case of entanglement witnesses \cite{Bourenanneetal2004fidwit}. States in the vicinity of 
entangled states are entangled, too and hence the witness 
\begin{equation}
    \mathcal{W}_k = \Omega^2_k \mathds{1} - \ketbra{\psi}{\psi}
\end{equation} will detect states close to $\ket{\psi}$ which 
have partition rank at least $k+1$.

The key problem is to compute the overlap $\Omega^2$, where we omit the index $k$ for better readability. Interestingly, 
as we will show below, this can be effectively seen as a geometric 
measure of entanglement in an extended space, i.~e., it falls back 
to considering the trivial partition $A|B$. The advantage is that
the geometric measure of entanglement is well-studied and many 
tools for its computation are known \cite{peres1996separability, 
HORODECKI19961, GuehneReimpellWerner2007seesaw, GerkeVogelSperling2018seesaw, Weinbrenneretal2026hier}. Since our construction is quite general, this 
also demonstrates how methods from entanglement theory can be employed 
to gain insights into  mathematical problems related to the partion rank
\cite{Naslund2020, karam2023smallsunflowersstructureslice, CohenMoshkovitz2023, BikDraismaLampertZiegler2025partiitonrank, Oneto2025review}.

We start by considering the approximation of an arbitrary three-qutrit 
state $\ket{\psi}$ with a state $\ket{\eta}$ of partition rank $k = 2$,
the method can directly be generalized to other cases. Concretely, we 
take states of the form $\ket{\eta} = \ket{\varphi_1}_A\ket{\phi_1}_{BC}+\ket{\phi_2}_{AB}\ket{\varphi_2}_C$, where $\ket{\eta}$
is normalized, but the one-qutrit states $\ket{\varphi_j} \in \mathbb{C}^3$ 
and the two-qutrit states $\ket{\phi_j} \in \mathbb{C}^9$ are not necessarily normalized, and we wish to maximize $|\braket{\psi}{\eta}|^2$.
If $\ket{\eta}$ is a superposition of two other different bipartitions, 
the method below can readily be adapted, and if $\ket{\eta}$ is a 
superposition of two states which are separable to the same bipartition, 
then one just needs to maximize the overlap with states of Schmidt rank 
two, which is given by the the sum of the squares of the largest two 
Schmidt coefficients  of $\ket{\psi}$ \cite{Guehne2004}.

We arrange the states $\ket{\varphi_j}$ (and $\ket{\phi_j}$) corresponding 
to the smaller (larger) systems in two vectors by taking the direct sum of 
the respective states: $\ket{s} = \ket{\varphi_1}_A \oplus \ket{\varphi_2}_C \in \mathbb{C}^6$ and $\ket{L} = \ket{\phi_1}_{BC} \oplus \ket{\phi_2}_{AB}\in \mathbb{C}^{18}$. This extension (and the map back) can be described by a canonical operator $E$ fulfilling $\ket{\eta} = E \ket{s}\ket{L}$, in fact, 
$E$ maps from $\mathbb{C}^6 \otimes \mathbb{C}^{18}$ to the three-qutrit 
space $(\mathbb{C}^3)^{\otimes 3}.$ The optimization problem then translates 
to (see  Appendix~\ref{app:algo} for more details)
\begin{align}
    \Omega^2(\psi) 
    &= \sup_{\ket{\eta}\in \mathbb{P}_2} |\braket{\psi}{\eta}|^2
    \label{eq:translation}
    \\
    &= \sup_{\ket{s}, \ket{L}}|\langle \Psi \ket{s L}|^2, \text{ such that } \bra{sL}E^\dagger E\ket{sL} = 1, 
    \nonumber
\end{align}
where $\ket{\Psi} = E^\dagger \ket{\psi}$ corresponds to the state $\ket{\psi}$
lifted by the adjoint $ E^\dagger$ to the high-dimensional space
and the normalization of the state $\ket{\eta}$ is encoded in 
the matrix $M = E^\dagger E$. In fact, for $M=\mathds{1}$ this
would reduce to the bipartite geometric measure of entanglement, demonstrating 
the analogy explicitly.

We can therefore adapt two well-established tools to lower- and upper-bound 
the overlap $\Omega^2$. First, one can generalize see-saw algorithms \cite{GuehneReimpellWerner2007seesaw, GerkeVogelSperling2018seesaw} to
find good approximations $\ket{sL}$ and consequently find a good lower bound
on $\Omega^2$. For that, note that for a fixed $\ket{s}$ in Eq.~(\ref{eq:translation}) the optimal $\ket{L}$ can analytically 
be determined as solution of a generalized eigenvalue problem. 
This allows for an iteration, by starting
with a random $\ket{s}$, and first optimizing $\ket{L}$ and then 
optimizing $\ket{s}$ etc.

Second, one can use relaxations of the optimization problem
to obtain upper bounds. The set of product states $\sigma=\ketbra{sL}{sL}$ 
with the appropriate normalization $\tr(\sigma M) = 1$ is a subset of 
the appropriately normalized mixed quantum states fulfilling the 
criterion of the positivity of the partial transpose (PPT) \cite{peres1996separability,HORODECKI19961}. Maximization over this subset
can reliably be done via semidefinite programs \cite{reviewSDP2024,WeinbrennerGuehne2025}, in fact, convergent hierarchies leading to an exact approximation can be derived~\cite{DPShierarchy2004,berta2022semidefinite}.
The combination of both methods leads to upper and lower bounds
$
(\Omega^2)^\mathrm{s-s} \leq \Omega^2 \leq (\Omega^2)^\mathrm{SDP},
$
details of the methods are explained in Appendix~\ref{app:algo}.

These two methods work particularly well when looking for approximations
with low partition rank, but the optimization becomes rather large in 
system size when going to higher partition rank approximations. To that 
end, we introduce a third algorithm based on the Gauß-Newton method. The key idea here is the following: 
Consider the state 
$\ket{\psi} = \sum_{P\in \mathcal{P}}\ket{\gamma_P}$ written in some
(non-optimal) decomposition, which should be approximated by $\ket{\eta}.$ 
Each term 
$\ket{\gamma_P}$
can be written in its Schmidt decomposition with respect to the 
partition $P \in \mathcal{P}$. 
In order to bring $\ket{\psi}$ in the form of $\ket{\eta}$ above, we 
aim for $\ket{\gamma_{A|BC}}$ and $\ket{\gamma_{C|AB}}$ to have only 
one nonzero Schmidt coefficient each and for $\ket{\gamma_{B|AC}}$ 
to vanish completely. We arrange the Schmidt coefficients, which 
are supposed to be zero in a vector and minimize them with a Gauß-Newton 
algorithm (more details in Appendix~\ref{app:algo}). Then, if the algorithm 
converges to the zero vector, we will find an exact decomposition 
of the form $\ket{\eta}$, otherwise we obtain a good approximation and
a lower bound on $\Omega^2$ {by truncating} the Schmidt decompositions.
Note that a similar method was introduced in Ref.~\cite{TomiokaSuzuki2013MLpaper}. 

%%%%%%%%%%%%%%%%%%%%%%%%%%%%%%%%%%%%%%%%%%%%%%%%%%%%%
{\it Application to various three-particle states.---}
%%%%%%%%%%%%%%%%%%%%%%%%%%%%%%%%%%%%%%%%%%%%%%%%%%%%
When determining the geometric measure of partition rank two, different combinations of partitions have
to be taken into account. If both terms in the superposition are separable for the same partition, 
the maximal overlap is, as mentioned above, given 
by the sum of the two largest Schmidt coefficients squared. Only for the case that they are separable for different partitions the algorithms introduced above
are needed. For our computations, we focus on the 
lower bound from the see-saw algorithm and the upper bound coming from the PPT relaxation; the results 
of the Gauß-Newton-based algorithm coincide with those of the see-saw. We then compute the geometric measure of partition rank 2, following $G_2(\psi) = 1-\sup_{\ket{\eta}\in \mathbb{P}_2}|\braket{\eta}{\psi}|^2$, and summarize the results in Table~\ref{tab:results}.

\begin{table}[t]
    \centering
    \begin{tabular}{|c|c|c|c|c|}\toprule
       state  & $\max_{\rm bp} (s_1^2+s_2^2)$ & $(\Omega^2)^{\mathrm{s-s}}$ & $(\Omega^2)^{\mathrm{SDP}}$ & $G_{2}$ \\\hline
       $\ket{\mathrm{GHZ}}$  & $\nicefrac{2}{3}$ & $\nicefrac{2}{3}$ & $\nicefrac{2}{3}$ & $\nicefrac{1}{3}$\\
       $\ket{\mathrm{S}}$  & $\nicefrac{2}{3}$ & $\nicefrac{2}{3}$ & $\nicefrac{2}{3}$ &  $\nicefrac{1}{3}$\\
       $\ket{\psi_2}$ & $\nicefrac{3}{4}$ & 1 & 1 & 0\\
       $\ket{\psi(1,1,1)}$ & $\nicefrac{5}{6}\approx 0.83$ & 0.7968  & 0.8058  & $\nicefrac{1}{6}$\\
       $\ket{\psi(3,4,5)}$ & $\nicefrac{41}{50}=0.82$ & 0.8424  &  0.8780  & $\geq$ 0.1220\\\hline
    \end{tabular}
    \caption{Overlaps $\Omega^2$ with states of partition rank two for different target states (defined in the text) and computed with different methods. Second column: If both terms in the superposition are separable for the same partition, the overlap
is given by the sum of the two largest squared Schmidt coefficients. Third and fourth column: If both terms belong to different bipartitions, the maximal $(\Omega^2)^{\rm{s-s}}$ and $(\Omega^2)^{\text{SDP}}$ 
give lower and upper bounds. Fifth column: 
The lower bound on the geometric measure of 
partition rank two is computed by $G_2\geq 1-\max\{s_1^2+s_2^2,(\Omega^2)^{\rm SDP}\}$. See text for further details.}
    \label{tab:results}
\end{table}

From these examples we see that different entanglement structures are possible. First, the GHZ state as well as the three-qutrit supersinglet state $\ket{\mathrm{S}}=\nicefrac{1}{\sqrt{6}}(\ket{012}+\ket{201}+\ket{120}-\ket{102}-\ket{210}-\ket{021})$ are both equally well approximated by states of 
Schmidt rank 2 (for a fixed bipartition)
and states partition rank 2 (containing two different bipartions). Second, the state $\ket{\psi_2}$ from Eq.~(\ref{eq:state2bisep}), has partition rank two but its largest overlap with states of Schmidt rank two is only $\nicefrac{3}{4}$. Third, we introduce the state
\begin{align}
    \ket{\psi(x,y,z)}&=\mathcal{N}[x(\ket{210}+\ket{201})\\\nonumber
    &+y(\ket{100}+\ket{122})+z(\ket{011}+\ket{022})],
\end{align}
where $\mathcal{N}$ denotes the normalization. For $x=y=z=1$ the state has partition rank three and its best rank-2 approximation is a Schmidt rank-2 state. For $x=3$, $y=4$, and $z=5$, however, it is better approximated by a partition-rank-2 state with mixed bipartitions. We further see that in the first examples, the results obtained by the SPD coincide with the ones obtained by the see-saw. In these cases, the result is tight, while for the states $\ket{\psi(x,y,z)}$ we can only bound the overlap.

The results can be used to construct witnesses for partition rank three as mentioned above. This distinguishes the partition rank once more from the GME-dimension, since we see that for the state $\ket{\psi(3,4,5)}$ a fidelity exceeding 0.820 would certify GME-dimension 3, but to certify true partition rank three 
definitely a fidelity larger than $0.8424$
%0.878 
is needed, as the see-saw algorithm finds partition rank-two states with this fidelity.

%%%%%%%%%%%%%%%%%%%%%%%%%%%%%%%%%%%%%%%%%%%%%%%%
{\it Identifying maximally entangled states.---} 
%%%%%%%%%%%%%%%%%%%%%%%%%%%%%%%%%%%%%%%%%
Having defined the entanglement quantifier,
we can ask for the maximally entangled state with respect to the partition rank. Precisely, we aim to identify the state $\ket{\psi}$ and the minimal value $\Omega^2_{\mathrm{opt}}$ such that
\begin{align}
    \Omega^2_\mathrm{opt} = \min_{\ket{\psi}}\sup_{\ket{\eta}\in \mathbb{P}_2}|\braket{\psi}{\eta}|^2.
\end{align}
To that end, we use the algorithm described in Ref.~\cite{SteinbergGuehne2024} and replace the part of finding the closest product state by finding the closest partition rank 2 state. We find that the algorithm always converges to a state whose marginals are all maximally mixed, reaching a value of $\Omega^2_\mathrm{opt} = \nicefrac{2}{d}$ for $d = 3,4$. Such states are called absolutely maximally entangled (AME)~\cite{scott2004ame,HuberGuehneSiewert2017ame,Ratheretal2022ame,Rajchel_Mieldzio__2026AME}. 

Using the fact that for three particles the smallest possible overlap with Schmidt rank 2 states is $\nicefrac{2}{d}$ and is only reached by AME states. We can give an analytical proof (details in Appendix~\ref{app:AME}) and generalize our observation.

\textbf{Observation 2.} \textit{For three-particle states, the geometric measure of partition rank two is maximized by $\ket{\psi}$ if and only if $\ket{\psi}$ is absolutely maximally entangled. We then have $\Omega^2 = \sup_{\eta\in\mathbb{P}_2}|\braket{\eta}{\psi}|^2 = \nicefrac{2}{d}$.}

Note that for more particles, the situation becomes more intricate. First, the product structure is more variable, so the simple argument from before might not hold. Second AME states do not exist for all combinations of particle number $N$ and local dimension $d$ immediately excluding a direct generalization to $N$ particles. We can, however, derive also bounds on the geometric measure of slice rank for general states based on the Schmidt coefficients for all bipartitions, see Appendix~\ref{app:slice} for details.

{\it The partition rank vector.---}
So far, the discussion concerned the number of terms in a decomposition as in Eq.~(\ref{eq:partitionrank}), but not their detailed structure. In order to arrive at a more sensitive classification, we may count the terms
for the different bipartitions, introducing the partition rank vector (PRV).

\textit{Consider a state $\ket{\psi}$ where one orders a 
general, not necessarily optimal, decomposition from Eq.~(\ref{eq:partitionrank})
as
\begin{align}
\ket{\psi} = \sum_{P\in \mathcal{P}} \sum_{\ell=1}^{\rr_P} \ket{\mu^P_\ell},
\label{eq-prv}
\end{align}
combining states $\ket{\mu^P_\ell}$ which are separable for the same bipartition $P$. This means that
we are considering superpositions of states 
$\ket{\gamma_P}=\sum_{\ell=1}^{r_P} \ket{\mu^P_\ell}$
with Schmidt rank $\rr_P$
for the given bipartition $P$. Then, we arrange the $\rr_P$ in the partition rank vector (PRV)
$\mathbf{r}$. For three particles, we have 
$\mathbf{r} = (\rr_{A|BC},\rr_{B|AC},\rr_{C|AB})$.
}

A given state $\ket{\psi}$ may admit several decompositions, corresponding to different 
PRVs, not all may correspond
to the partition rank decomposition as in Eq.~\eqref{eq:partitionrank}.
In fact, for three particles we can use
the orthogonalizability of the terms in 
Eq.~(\ref{eq-prv}) to derive criteria to 
exclude certain PRVs from the bipartite Schmidt coefficients,  see Appendix~\ref{app:orth} for a full discussion. 

Moreover, 
this gives rise to
two definitions: First, we call a PRV \textit{tight} if every entry is minimal, meaning
that there is no decomposition of $\ket{\psi}$
with a PRV where one entry is smaller and the others
are the same. For example if a state with PRV 
$\mathbf{r} = (2,2,2)$ cannot be decomposed 
following $\mathbf{r}' =(1,2,2)$ or permutations thereof, the decomposition assigned to $\mathbf{r}$ is tight. 
Second, we call a PRV \textit{optimal}, if it corresponds
to an optimal decomposition as in Eq.~\eqref{eq:partitionrank}, meaning 
that the sum of the entries $\rrp$ equals the partition rank $r_\PP$.
Clearly, any optimal PRV is also tight but the converse does 
not hold. For an example and more details see 
Appendix~\ref{app:dec}.
 
In order to demonstrate that the PRV leads to new insights,  
let us consider for three qutrits the state family 
\begin{align}
    \ket{\psi(\theta)} = \sin(\theta)\ket{\mathrm{GHZ}} + \cos(\theta)\ket{\mathrm{S}}, \label{eq:ghzsparam}
\end{align}
for $\theta\in[0,\nicefrac{\pi}{2}]$. All of these states are 
AME and therefore have Schmidt-number vector~\cite{HuberdeVicente2013} $r_\mathrm{SN} = (3,3,3)$, GME-dimension~\cite{CobucciTavakoli2024} $d_\mathrm{GME} = 3$, partition rank $r_\mathcal{P} = 3$. Moreover, 
they all maximize the geometric measure of partition rank two with $G_2(\psi(\theta))=\nicefrac{2}{3}$. But as we will see now, the PRV
leads to a finer classification.

First, we note that since all states have Schmidt-number 
vector $r_\mathrm{SN} = (3,3,3)$, they all admit decompositions 
with the PRV $\mathbf{r} = (3,0,0)$ and permutations, which are 
optimal. For partition rank three, two other potential optimal 
PRVs are $\mathbf{r} = (2,1,0)$ (and permutations) and 
$\mathbf{r} = (1,1,1)$.

The GHZ state $\ket{\psi(\theta=\nicefrac{\pi}{2})}$ admits all 
of them. 
Contrary to that, the singlet state $\ket{\mathrm{S}} = \ket{\psi(\theta=0)}$, 
does not admit decompositions with the PRV $\mathbf{r} = (2,1,0)$
and permutations, as can be seen from  Fig.~\ref{fig:ghzsparam} and the discussion below. However, one can easily check that it admits, decompositions with the PRV
$\mathbf{r} = (2,2,0)$, which is therefore tight. 

We now go further and, similar to above, consider the largest overlap 
with states which admit a PRV,
$\Omega^2_\mathbf{r} = \sup_{\ket{\eta} \in \mathbf{r}}
|\braket{\eta}{\psi}|^2$. This can be estimated with the algorithms
mentioned above. In Fig.~\ref{fig:ghzsparam} we plot the overlaps $\Omega^2_{(1,1,1)}$ and $\Omega^2_{(2,1,0)}$ parametrized by 
the angle $\theta$.

\begin{figure}
    \centering
    \includegraphics[width=\linewidth]{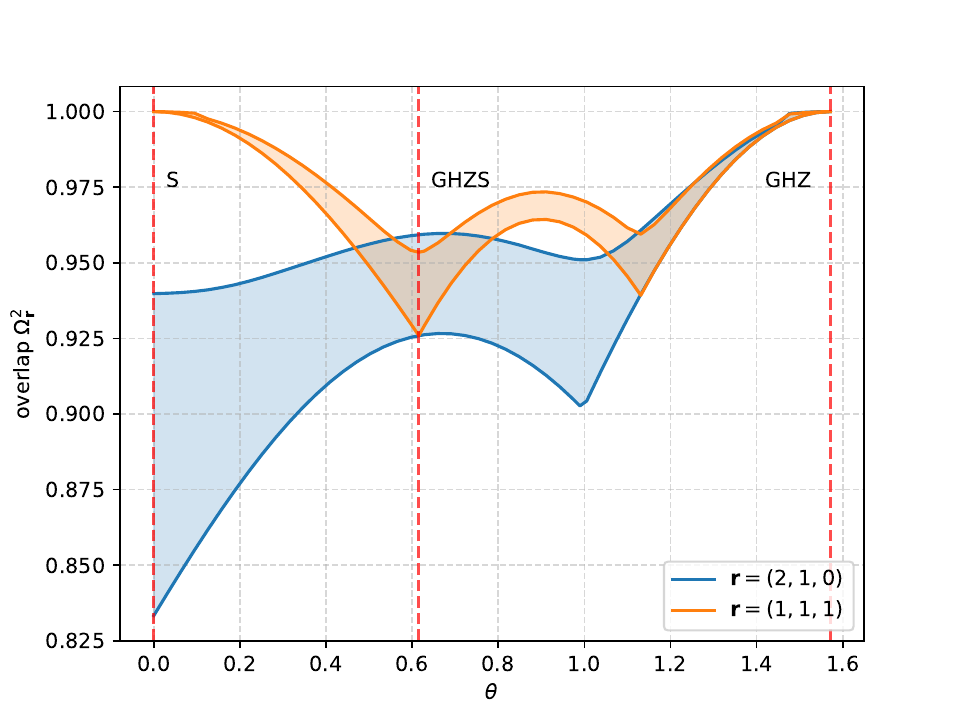}
    \caption{Overlap $\Omega^2_{\mathbf{r}}$ 
    for different parametrizations of the state $\ket{\psi}=\cos(\theta)\ket{\mathrm{S}}+\sin(\theta)\ket{\mathrm{GHZ}}$. 
    The upper bounds were computed by the PPT relaxation, the lower bounds 
    by the see-saw algorithm. We find that in this case the vector $\mathbf{r}=(2,1,0)$ includes all permutations. }
    \label{fig:ghzsparam}
\end{figure}

We note that almost all states $\ket{\psi(\theta)}$ with $\theta \neq 0$ and $\theta \neq \nicefrac{\pi}{2}$ do not allow the PRV $\mathbf{r} = (2,1,0)$ 
and $\mathbf{r} = (1,1,1)$ and therefore admit tight vectors that are not optimal. Moreover, we see that for $\mathbf{r} = (2,1,0)$ the state with the largest geometrical distance, i.e. the smallest overlap $\Omega^2_{\mathbf{r}}$ is the singlet state, reaching the value $\nicefrac{5}{6}\leq \Omega^2_{(2,1,0)}\leq 0.9398$. For the vector $\mathbf{r} = (1,1,1)$ 
the state with largest distance seems to be given by
\begin{align}
    \ket{\mathrm{GHZS}}=\frac{1}{\sqrt{3}} \big( |\mathrm{GHZ}\rangle + \sqrt{2}|\mathrm{S}\big),
\end{align}
with $\nicefrac{25}{27}\leq \Omega^2_{(1,1,1)}\leq 0.9535$ (see also Fig.~\ref{fig:ghzsparam}).
Independently of that state family we also looked systematically for states minimizing the overlap $\Omega^2_\mathbf{r}$, using the adapted algorithm from Ref.~\cite{SteinbergGuehne2024}. We then find that the algorithm converges to %these states 
the state $\ket{\mathrm{GHZS}}$ as well. 
Finally, that the GHZS state has no decomposition with PRV $\mathbf{r}=(1,1,1)$ can also be shown with an algebraic certification (see Appendix~\ref{app:cert}).

{\it Partition rank vectors and state transformations.---}
The PRV has an operational meaning as it can characterize the impossibility
of transformations between single copies of states using local operations and classical communication (SLOCC) \cite{dur2000three}. 
Recall that a state $\ket{\phi}$
can be prepared from $\ket{\psi}$ via SLOCC~\cite{vidal1999entanglement}, if there are matrices $A,B,C$ such that $\ket{\phi} = A \otimes B \otimes C \ket{\psi}.$ 
Such transformations can only preserve
or reduce the number of terms in decompositions as Eqs.~(\ref{eq:partitionrank}, \ref{eq-prv}) \cite{chitambar2008tripartite}. For capturing this effect, we define a relation between PRVs as follows.
We say $\mathbf{r} < \mathbf{r}'$, if the vector $\mathbf{r}' - \mathbf{r}$ has only nonnegative entries
and $\mathbf{r}' \neq \mathbf{r}.$
Then we have:

\textbf{Observation 3.} \textit{If $\ket{\psi}$
admits a decomposition with PRV $\mathbf{r}$ and
$\ket{\phi}$ can be reached from $\ket{\psi}$ via SLOCC,
then $\ket{\phi}$ must also admit a decomposition with 
PRV $\mathbf{r}.$ Specifically, if $\ket{\phi}$ has a tight PRV $\mathbf{r}'$ and $\ket{\psi}$ admits a decomposition with $\mathbf{r} < \mathbf{r}'$,
then the SLOCC transformation from $\ket{\psi}$ to
$\ket{\phi}$  is impossible.}

To give an example, consider the state $\ket{\mathrm{GHZS}}$ for which the PRV $\mathbf{r}'= (2,2,0)$ is tight. This can not be reached via SLOCC from $\ket{\mathrm{GHZ}}$ since the GHZ state allows for a PRV
$\mathbf{r} = (2,1,0).$ It can also not be reached from
the singlet state $\ket{\mathrm{S}}$, since this state
allows for $\mathbf{r} = (1,1,1)$ but the $\ket{\mathrm{GHZS}}$ state does not. Note that these conclusions are not possible by looking at the Schmidt-number vector or the partition rank alone.

%%%%%%%%%%%%%%%%%%%%%%5
{\it Connections to mathematics.---}
%%%%%%%%%%%%%%%%%%%%%%
Let us finally discuss how our work and the perspective of quantum physics can help to understand problems arising in the field of mathematics. 

First, the method to study the computation of the partition rank via a PPT relaxation and a semidefinite
program is clearly physically motivated. This method allows proving upper bounds on the partition rank of a tensor. Our methods can also be adapted and applied to the tensor rank -- a decomposition into only fully separable states. In Appendix~\ref{app:borderrank}, we show this on the example of the 
matrix multiplication tensor~\cite{chitambar2008tripartite, ChristandlZuiddam2019matmultensor, ChristandlLysikovSteffanWernerWitteveen2024matmultensor,Strassen1969matmul}.

Second, the Open Problem 10 of Ref.~\cite{Shpilka2010openproblems}, formulated in terms of the partition rank in Ref.~\cite{Naslund2020} (Problem 12) has a nice physical interpretation. In 
the language of quantum mechanics, a stronger version of it reads as follows:

Consider two copies of a maximally entangled state $\ket{\Phi^+} =\nicefrac{1}{\sqrt{d}}\sum_{i=0}^{d-1} \ket{ii}$ in the  tensor product $\ket{\Phi^+}_{AB}\ket{\Phi^+}_{CD}$, how many terms are needed in a decomposition allowing only partitions $\{AC|BD,AD|BC\}$? How does this scale with the local dimension $d$?

Interestingly, for qubits ($d=2$), the decomposition 
\begin{align}
\label{eqRR}
\ket{\Phi^+}_{\mathrm{AB}}\ket{\Phi^+}_{\mathrm{CD}} = 
     \ket{\Phi^+}_{\mathrm{AC}}\ket{\Phi^+}_{\mathrm{BD}}+\ket{\Psi^-}_{\mathrm{AD}}\ket{\Psi^-}_{\mathrm{BC}},
\end{align}
where $\ket{\Psi^-}$ denotes a two-qubit singlet state,
is known since the early days of quantum mechanics; in fact, this inspired Georg Rumer (a student of Max Born) to develop a
nice graphical rule to write down a basis 
for a space spanned by multiqubit singlet state~\cite{rumer1932theorie}.

For arbitrary $d$, we obtain an upper bound on the number of terms by a generalization. We replace the last term in Eq.~(\ref{eqRR}) by
\begin{equation}
\ket{\Psi^-}_{\mathrm{AD}}\ket{\Psi^-}_{\mathrm{BC}}  \rightarrow \frac{1}{d}\sum_{0 \leq i<j<d} (\ket{ij}-\ket{ji})_{AD}\otimes(\ket{ij}-\ket{ji})_{BC} ,
\end{equation}
leading to the bound
\begin{equation}
    r_{\{AC|BD,AD|BC\}}(\ket{\Phi^+}_{AB}\ket{\Phi^+}_{CD})\le 1+d(d-1)/2.
\end{equation}
Still, this leaves the question of a bound with linear scaling in the dimension $d$ as suggested in Ref.~\cite{Naslund2020} 
open. Operationally, such decompositions are relevant for the construction of quantum networks, as they demonstrate that entanglement can be created between some node by entangling 
and superposing others.

{\it Conclusion.---}
We introduced a quantifier for the dimensionality of  genuine multipartite entanglement based on the partition rank. We connected it to well-known concepts, such as the geometric measure of entanglement and semi-definite programming relaxations, providing practical methods for its computation. With this we showed that our approach unveils properties of quantum states that were inaccessible to previous parameters. We finally showed that we can refine the partition rank using the partition rank vector. Our methods will find applications in the experimental characterization of multiparticle entanglement.

Our work opens new research perspectives as it provides novel tools to the quantum information community to tackle the highly-intricate problem of high-dimensional multipartite entanglement characterization. It also  may
stimulate the collaboration between physics 
and mathematics and computer science, to deepen the understanding of partition rank so that new applications in different areas of physics can be developed. In these directions, we provided a rephrasing of an open problem in quantum-information-theoretic terms. Future works may solve this or the more general version 
(Problem 13 in Ref.~\cite{Naslund2020}): Taking a state that is a tensor product of GHZ states along one partition, what is its partition rank along different sets of partitions?

\begin{acknowledgments}
    We thank M. Christandl, G. Cobucci, K. Goodenough, F. Otto, A. Rico, A. Tavakoli, L. T. Weinbrenner and J. Zuiddam for useful discussions. 
    This research was funded by the Deutsche Forschungsgemeinschaft (DFG, German Research Foundation, project numbers 563437167 and 572827488), the Sino-German Center for Research Promotion (Project M-0294), and the German Federal Ministry of Research, Technology and Space (Project QuKuK, Grant 
No.\ 16KIS1618K and Project BeRyQC, Grant No.\ 13N17292). SD acknowledges support from the House of Young Talents of the University of Siegen, and RK thanks the Hans and Ria Messer Stiftung. 

\textit{AI statement:} ChatGPT-5.6 Sol helped with the proofs in Appendix \ref{app:orth}.

\end{acknowledgments}

%%%%%%%%%%%%%%%%%%%%%%%%%%%%%%%%%%%%%%%%%%%%%%%%%%%
%%%%%%%%%%%%%%%%%%%%%%%%%%%%%%%%%%%%%%%%%%%%%%%%%%%
%%%%%%%%%%%%%%%%%%%%%%%%%%%%%%%%%%%%%%%%%%%%%%%%%%%
%%%%%%%%%%%%%%%%%%%%%%%%%%%%%%%%%%%%%%%%%%%%%%%%%%%
%%%%%%%%%%%%%%%%%%%%%%%%%%%%%%%%%%%%%%%%%%%%%%%%%%%
%%%%%%%%%%%%%%%%%%%%%%%%%%%%%%%%%%%%%%%%%%%%%%%%%%%
%%%%%%%%%%%%%%%%%%%%%%%%%%%%%%%%%%%%%%%%%%%%%%%%%%%
%%%%%%%%%%%%%%%%%%%%%%%%%%%%%%%%%%%%%%%%%%%%%%%%%%%
\onecolumngrid
\appendix
\renewcommand{\thesection}{\Alph{section}}

\section{Orthogonality of the decompositions}
\label{app:orth}
In this appendix we will discuss the orthogonality of partition rank decompositions. We will show that in the three-partite case, any partition
rank decomposition can be orthogonalized, which directly generalizes the Schmidt decomposition to the the three-particle case. We will then see that for $N$ particles an orthogonalization is not necessarily always possible.

Using the construction in Ref.~\cite{tao2016slice}, we note the following fact:
\begin{observation}\label{obs:orth}
    For any three-partite state in a (not necessarily optimal) partition rank decomposition $\ket{\psi} = \sum_{P\in\mathcal{P}}\sum_{\ell=1}^{\rr_P}\ket{\mu_\ell^P}$ the terms 
    $\ket{\mu_\ell^P}$ can be chosen to be mutually orthogonal. 
\end{observation}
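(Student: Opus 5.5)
Given a decomposition with PRV $(\rr_A,\rr_B,\rr_C)$ (writing $\rr_A=\rr_{A|BC}$ etc.), I will produce a new decomposition with the same PRV whose terms are mutually orthogonal. The idea follows Tao's slice-rank construction: turn each block into a projection onto a subspace, with the subspaces on the three sites chosen so that the projected pieces are automatically orthogonal.

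\emph{Step 1: locate the relevant subspaces.} In the given decomposition, collect the single-site factors $\ket{\varphi^A_\ell}$, $\ket{\varphi^B_\ell}$, $\ket{\varphi^C_\ell}$ of the terms $\ket{\mu^P_\ell}$. Let $U_X$ denote their span on site $X$, so that $\dim U_X\le \rr_X$. Write $\Pi_X$ for the orthogonal projector onto $U_X$ and $\Pi_X^\perp=\mathds{1}-\Pi_X$.

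\emph{Step 2: telescope the identity.} Expand
\begin{align}
\mathds{1}_{ABC} = \Pi_A\otimes\mathds{1}_{BC} + \Pi_A^\perp\otimes\Pi_B\otimes\mathds{1}_C + \Pi_A^\perp\otimes\Pi_B^\perp\otimes\Pi_C + \Pi_A^\perp\otimes\Pi_B^\perp\otimes\Pi_C^\perp .
\end{align}
Apply this to $\ket{\psi}$. The last term vanishes: each $\ket{\mu^P_\ell}$ has its site-$X$ factor in $U_X$, so it is annihilated by $\Pi_X^\perp$ on that site. This leaves three pieces, $\ket{\gamma_A}=(\Pi_A\otimes\mathds{1})\ket{\psi}$, $\ket{\gamma_B}=(\Pi_A^\perp\otimes\Pi_B\otimes\mathds{1})\ket{\psi}$ and $\ket{\gamma_C}=(\Pi_A^\perp\otimes\Pi_B^\perp\otimes\Pi_C)\ket{\psi}$.

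The three pieces are mutually orthogonal:
\begin{itemize}
\item $\ket{\gamma_A}\perp\ket{\gamma_B},\ket{\gamma_C}$, because on site $A$ they lie in $U_A$ versus $U_A^\perp$.
\item $\ket{\gamma_B}\perp\ket{\gamma_C}$, because on site $B$ they lie in $U_B$ versus $U_B^\perp$.
\end{itemize}

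\emph{Step 3: split each block into orthogonal separable terms.} Take an orthonormal basis $\{\ket{e_k}\}$ of $U_A$. Then
\begin{align}
\ket{\gamma_A}=\sum_k \ket{e_k}_A\,(\bra{e_k}\otimes\mathds{1})\ket{\psi},
\end{align}
which has at most $\rr_A$ terms, each separable for $A|BC$. These terms are mutually orthogonal because the $\ket{e_k}$ are orthonormal. The same argument applied to $\ket{\gamma_B}$ with a basis of $U_B$, and to $\ket{\gamma_C}$ with a basis of $U_C$, gives at most $\rr_B$ and $\rr_C$ terms respectively. Since the blocks are orthogonal to each other, all terms are mutually orthogonal.

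The resulting PRV is entrywise at most the original one. If a PRV that matches exactly is wanted, zero vectors can be added as padding, or the count can simply be regarded as not increasing.

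The main obstacle is making sure the projected pieces remain separable for the intended bipartition while staying orthogonal across different blocks. The ordered telescoping handles this, since each block is separable by construction. The step I most need to check is that the last term really vanishes, i.e., that $\ket{\psi}$ is supported in the span of the three "at least one leg in $U_X$" subspaces. This also explains why the argument is special to three parties: with more parties, the slices of larger partitions do not telescope this way.
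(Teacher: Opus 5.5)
Your proof is correct and is essentially the paper's argument. It uses the same telescoped identity $\mathds{1}=\Pi_A\otimes\mathds{1}+\Pi_A^\perp\otimes\Pi_B\otimes\mathds{1}+\Pi_A^\perp\otimes\Pi_B^\perp\otimes\Pi_C+\Pi_A^\perp\otimes\Pi_B^\perp\otimes\Pi_C^\perp$, with $\Pi_X$ projecting onto the span of the single-site factors, the same vanishing of the last term, and the same orthogonality from the alternating $\Pi_X$/$\Pi_X^\perp$ pattern. The differences are cosmetic: you define $U_X$ explicitly, so $\dim U_X\le\rr_X$ and the vanishing is immediate rather than cited from Tao, and you split each block with an orthonormal basis of $U_X$ instead of its Schmidt decomposition, which works equally well.
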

\begin{proof}
Let us consider the decomposition
\begin{align}
        \ket{\psi} &= \sum_{P\in\mathcal{P}}\sum_{\ell=1}^{\rr_P}\ket{\mu_\ell^P}\\
         &= \ket{\xi_{A|BC}} +\ket{\xi_{B|AC}}+\ket{\xi_{C|AB}}\\
        &= \sum_{i=1}^{\rr_{A|BC}}\ket{\varphi^{A}_i}_A\ket{\phi^{BC}_i}_{BC}
        +\sum_{i=1}^{\rr_{B|AC}}\ket{\varphi^{B}_i}_B\ket{\phi^{AC}_i}_{AC}
        +\sum_{i=1}^{\rr_{C|AB}}\ket{\varphi^{C}_i}_C\ket{\phi^{AB}_i}_{AB}
\end{align}
According to Ref.~\cite{tao2016slice}, there are always $\rr_P$-dimensional subspaces, with projectors $\Pi_X$, for $X=A,B,C$, such that their orthogonal complements $\Pi_X^\perp$ fulfill
\begin{align}
    \Pi^\perp_{A}\otimes\Pi^\perp_{B}\otimes\Pi_C^\perp\ket{\psi} =0.
\end{align}
Now let us iteratively decompose the identity using that $\mathds{1}_X=\Pi_X+\Pi_X^\perp$:
\begin{align}
    \mathds{1} &= \Pi_A\otimes\mathds{1}_B\otimes\mathds{1}_C+\Pi_A^\perp\otimes\mathds{1}_B\otimes\mathds{1}_C\\
    &= \Pi_A\otimes\mathds{1}_B\otimes\mathds{1}_C
    +\Pi_A^\perp\otimes\Pi_B\otimes\mathds{1}_C+
    \Pi_A^\perp\otimes\Pi^\perp_B\otimes\mathds{1}_C\\
    &= \Pi_A\otimes\mathds{1}_B\otimes\mathds{1}_C
    +\Pi_A^\perp\otimes\Pi_B\otimes\mathds{1}_C+
    \Pi_A^\perp\otimes\Pi^\perp_B\otimes\Pi_C+
    \Pi_A^\perp\otimes\Pi^\perp_B\otimes\Pi_C^\perp
\end{align}
This expression will now help us to orthogonalize an arbitrary partition rank decomposition of a three-partite state. We have
\begin{align}
    \ket{\psi} = \mathds{1}\ket{\psi} = &\Pi_A\otimes\mathds{1}_B\otimes\mathds{1}_C\ket{\psi}\\
    +&\Pi_A^\perp\otimes\Pi_B\otimes\mathds{1}_C\ket{\psi}\\
    +&\Pi_A^\perp\otimes\Pi^\perp_B\otimes\Pi_C\ket{\psi}\\
    +&\Pi_A^\perp\otimes\Pi^\perp_B\otimes\Pi_C^\perp\ket{\psi}\\
    =& \sum_{i=1}^{\rr_{A|BC}}\ket{\varphi^{A}_i}_A\ket{\phi^{BC}_i}_{BC}+
           \sum_{i=1}^{\rr_{B|AC}}\ket{\varphi^{B}_i}_B\Pi_A\otimes\mathds{1}_C\ket{\phi^{AC}_i}_{AC}
        +\sum_{i=1}^{\rr_{C|AB}}\ket{\varphi^{C}_i}_C\Pi_A\otimes\mathds{1}_B\ket{\phi^{AB}_i}_{AB}\label{eq:A}\\
    +& \sum_{i=1}^{\rr_{B|AC}}\ket{\varphi^{B}_i}_B\Pi^\perp_A\otimes\mathds{1}_C\ket{\phi^{AC}_i}_{AC}
        +\sum_{i=1}^{\rr_{C|AB}}\ket{\varphi^{C}_i}_C\Pi_A^\perp\otimes\Pi_B\ket{\phi^{AB}_i}_{AB}\label{eq:B}\\
    +&\sum_{i=1}^{\rr_{C|AB}}\ket{\varphi^{C}_i}_C\Pi_A^\perp\otimes\Pi_B^\perp\ket{\phi^{AB}_i}_{AB}  \label{eq:C} \\
    =&\ket{\xi'_{A|BC}}+\ket{\xi'_{B|AC}}+\ket{\xi'_{C|AB}}.
\end{align}
In the last step, we summarized all terms in Line~\eqref{eq:A} as $\ket{\xi'_{A|BC}}$, the terms in Line~\eqref{eq:B} as $\ket{\xi'_{B|AC}}$
and in Line~\eqref{eq:C} as $\ket{\xi'_{C|AB}}$.  
All terms are mutually orthogonal, which is 
ensured by the respective distribution of 
projectors $\Pi_X$ and $\Pi_X^\perp$ projecting 
onto orthogonal subspaces. By construction the $\ket{\xi'_P}$ have Schmidt rank at most $\rrp$ 
since the respective projectors fix the maximal dimension of the respective single-party subspace 
$X$. Writing down their Schmidt decomposition gives the
desired partition-rank decomposition with all vectors
orthogonal.
\end{proof}

This observation has two interesting implications. 
First, the proof can be extended in complete analogy to the $N$ particle case. Important is however, that we may only consider terms that are separable with respect to a 1|rest partition, i. e. we only consider the slice rank. In this case, we know from Ref.~\cite{tao2016slice}
that projectors $\Pi_X$ and $\Pi_X^\perp$ with $\Pi_{A_1}^\perp\otimes\cdots\otimes\Pi_{A_N}^\perp \ket{\psi}=0$ always exist. Then we can proceed as in the three-particle case and decompose the identity into projectors onto the respective subspaces and their orthogonal. We will obtain $N+1$ terms, where the last term only consist of projectors onto the orthogonal subspaces and therefore vanishes when applying to $\ket{\psi}$. All other $N$ terms will build the vectors $\ket{\xi'_P}$ for each of the $N$ considered partitions $P$ and the orthogonality is given by construction as in the three-particle case.

Second, for any other set of partitions in the $N>3$-particle case, we can derive constraints for the non-existence of an orthogonal decomposition. 
\begin{observation}
    Any $N$-particle state $\ket{\psi}$ whose Schmidt coefficients $s_i^P$ with respect to partition $P$ are ordered decreasingly, fulfills 
    \begin{align}
        \mathcal{S}=\sqrt{\sum_{P\in\mathcal{P}}\sum_{i=1}^{\rr_{P}}\left(s_i^P\right)^2}\geq 1
        \label{eq-kriterium}
    \end{align}
    if an orthogonal partition rank decomposition into $k = \sum_{P\in\mathcal{P}}\rrp$ terms corresponding to the partition rank vector $\mathbf{r}=(\rrp)_{P\in \mathcal{P}}$ is possible. Hence, if the Schmidt coefficients do not fulfill this condition, such a decomposition into orthogonal terms cannot exist.
\end{observation}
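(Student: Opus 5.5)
Assume $\ket{\psi}$ is normalized and has an orthogonal decomposition $\ket{\psi}=\sum_{P\in\mathcal{P}}\ket{\xi_P}$. Each $\ket{\xi_P}=\sum_{\ell=1}^{\rrp}\ket{\mu^P_\ell}$ has Schmidt rank at most $\rrp$ with respect to $P$, and the terms are mutually orthogonal. The plan is to write $1=\braket{\psi}{\psi}$ as a sum of overlaps and bound each overlap by the largest Schmidt coefficients of $\ket{\psi}$ for the corresponding partition.

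First, orthogonality gives
\begin{align}
1=\braket{\psi}{\psi}=\sum_{P}\braket{\xi_P}{\psi}.
\end{align}
Indeed, $\braket{\xi_P}{\psi}=\sum_{P'}\braket{\xi_P}{\xi_{P'}}=\norm{\xi_P}^2$, so every summand is real and nonnegative. It also gives Pythagoras, $\sum_P\norm{\xi_P}^2=1$.

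Second, I bound each overlap by a fixed-bipartition result. For a vector $\ket{\xi}$ of Schmidt rank at most $r$ across $P$, one has $|\braket{\xi}{\psi}|\le \norm{\xi}\sqrt{\sum_{i=1}^{r}(s_i^P)^2}$. This is the statement quoted in the main text that the maximal overlap of $\ket{\psi}$ with normalized Schmidt-rank-$r$ states is $\sum_{i\le r}(s_i^P)^2$. It can also be seen from the Ky Fan / von Neumann trace inequality applied to the coefficient matrices. Applying it with $r=\rrp$ gives $\norm{\xi_P}^2\le \norm{\xi_P}\,\sigma_P$, where $\sigma_P:=\sqrt{\sum_{i=1}^{\rrp}(s_i^P)^2}$. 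Hence $\norm{\xi_P}\le\sigma_P$.

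Third, I combine the two steps:
\begin{align}
1=\sum_P\norm{\xi_P}^2\le\sum_P\sigma_P^2=\mathcal{S}^2.
\end{align}
Taking square roots gives $\mathcal{S}\ge1$. The contrapositive is the claimed non-existence criterion.

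The only care needed is in the second step: reducing an orthogonal decomposition with rank-$\rrp$ blocks to one overlap bound per partition, and using orthogonality in the form $\braket{\xi_P}{\psi}=\norm{\xi_P}^2$. It is this identity that keeps the cross terms from spoiling the estimate. The individual $\ket{\mu^P_\ell}$ being mutually orthogonal is more than needed; orthogonality of the blocks $\ket{\xi_P}$ suffices. If one instead worked directly with the terms $\ket{\mu^P_\ell}$, each has Schmidt rank one, and one would bound the sum of their squared norms within a block by $\sigma_P^2$ via the same Ky Fan argument.
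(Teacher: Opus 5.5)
Your proof is correct and rests on the same two ingredients as the paper's: the bound $|\braket{\xi_P}{\psi}|\le\norm{\xi_P}\,\sigma_P$ for a block of Schmidt rank at most $\rrp$, and orthogonality of the blocks. The paper applies the triangle inequality to $1=|\braket{\psi}{\psi}|$ and then Cauchy--Schwarz together with $\sum_P\norm{\xi_P}^2=1$, whereas you use $\braket{\xi_P}{\psi}=\norm{\xi_P}^2$ to get the slightly stronger per-block estimate $\norm{\xi_P}\le\sigma_P$ and then sum the squares.
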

\begin{proof}
    Let $\ket{\psi} = \sum_{P\in\mathcal{P}}\ket{\xi_P} = \sum_{P\in\mathcal{P}}\sum_{\ell=1}^{\rr_P}\ket{\mu_\ell^P}$ be an orthogonal decomposition of an $N$ particle state into $k$ terms, with partition rank vector $\mathbf{r} = (\rr_{P_i})_i$. Then, for its norm must hold:
    \begin{align}
       1 &= |\braket{\psi}{\psi}| = 
       |\bra{\psi}\sum_{P\in\mathcal{P}}\ket{\xi_P}|
       \leq |\alpha_1|\sqrt{\sum_{i=1}^{\rr_{P_1}} \left(s^{P_1}_i\right)^2}+|\alpha_2|\sqrt{\sum_{i=1}^{\rr_{P_2}} \left(s^{P_2}_i\right)^2}+\cdots \\
       &\leq \sqrt{\sum_{P\in\mathcal{P}}\sum_{i=1}^{\rr_{P}}\left(s_i^P\right)^2} \sqrt{\sum_j|\alpha_j|^2}
       =  \sqrt{\sum_{P\in\mathcal{P}}\sum_{i=1}^{\rr_{P}}\left(s_i^P\right)^2} := \mathcal{S}. 
    \end{align}
    In the first step, we used the triangle inequality and the fact that the squared overlap of a state $\ket{\psi}$ with a state of Schmidt rank $r$ is upper bounded by the sum of the $r$ largest squared Schmidt coefficients of $\ket{\psi}$. The prefactors $|\alpha_i|$ denote the norm of the $\ket{\xi_P}$. In the second step we used the Cauchy-Schwarz inequality and lastly, since we assumed that all terms are orthogonal we have $\sqrt{\sum_j|\alpha_j|^2}=1$.
\end{proof}

We can use this criterion to give an example for $N=4$ particles of a decomposition which can not be orthogonalized. Consider  the state
\begin{align}
\ket{\psi} = 
     \ket{\Phi^+}_{\mathrm{AC}}\ket{\Phi^+}_{\mathrm{BD}}+\ket{\Psi^-}_{\mathrm{AD}}\ket{\Psi^-}_{\mathrm{BC}}.
\end{align}
Here we have $\ket{\xi_{AC|BD}}=\ket{\Phi^+}_{\mathrm{AC}}\ket{\Phi^+}_{\mathrm{BD}}$ and $\ket{\xi_{AD|BC}}=\ket{\Psi^-}_{\mathrm{AD}}\ket{\Psi^-}_{\mathrm{BC}}$, which are not orthogonal. We obtain
\begin{align}
    \mathcal{S} = \sqrt{\left(s_1^{AC|BC}\right)^2+\left(s_1^{AD|BC}\right)^2} = \frac{1}{4}+\frac{1}{4} = \frac{1}{2} <1,
\end{align}
which violates the criterion. Hence, for this state an \textit{orthogonal} decomposition with respect to the partitions $AC|BC$ and $AD|BC$ is not possible.

Let us now consider the three-particle case again. Here we know that an orthogonal decomposition must always exist. Thus, we can use the criterion to make statements on the partition rank itself, or to be more precise on the partition rank vector. We take the states from Table~I in the main text. For the GHZ state and the Supersinglet state we directly see that no partition rank 2 decomposition is possible, since all Schmidt coefficients are equal to $\nicefrac{1}{\sqrt{3}}$, so for all possible combinations of partition rank two decompositions we obtain $\mathcal{S}=\nicefrac{2}{3}<1$. Additionally, for the state $\ket{\psi(1,1,1)}$ we find that it has Schmidt coefficients
$s_1^{A|BC} = \nicefrac{1}{\sqrt{2}}$, $s_2^{A|BC} = \nicefrac{1}{\sqrt{3}}$, $s_1^{B|AC}=s_2^{B|AC}=\nicefrac{1}{\sqrt{3}}$ and $s_1^{C|AB}=s_2^{C|AB}=\nicefrac{1}{\sqrt{3}}$, hence we have at most 
\begin{align}
    \mathcal{S}=\frac{1}{2}+\frac{1}{3} = \frac{5}{6} < 1,
\end{align}
ruling out a partition rank 2 decomposition. For the state $\ket{\psi(3,4,5)}$ we could not rule out a partition rank 2 decomposition using this criterion.

Let us give some final remarks. First, note that although this criterion
has similarities to fidelity-based entanglement criteria, it is only applicable to pure states. In order to detect partition rank $k$ experimentally, the quantitative criteria derived in the main text (or a fidelity bound derived from Eq.~(\ref{eq-kriterium}) ) have to be applied. Second, particularly in the three-partite case, the criterion shows well that high entanglement across several partitions implies high partition rank. This is because if a state $\ket{\psi}$ has high Schmidt rank across all bipartitions, its Schmidt coefficients are low and it will violate the criterion which implies a high partition rank.
Third, although the criterion in the tripartite case can in principle exclude certain partition rank vectors, it can only make statements on partition rank vectors corresponding to a decomposition with $k$ terms if the local dimension is at least $d=k+1$. We can see this directly since for any partition rank vector, the smallest possible value for $\mathcal{S}$ is reached by AME states, whose Schmidt coefficients are all equal to $\nicefrac{1}{\sqrt{d}}$. We then have
\begin{align}
     \mathcal{S} = \sum_{i=1}^k\frac{1}{d} = \frac{k}{d}, 
\end{align}
which is only smaller than one if $d>k$.

\section{Methods for the computation of the geometric measure of partition rank $k$}
\label{app:algo}
In this appendix we describe the technical details of the algorithms that allow us to approximate a given state $\ket{\psi}$ with a state $\ket{\eta}$ of a fixed partition rank. We start with the two approaches that generalize the geometric measure of entanglement. Then we will discuss the method based 
on the Gauß-Newton algorithm, which is particularly useful for finding large partition rank $k$ approximations. 

\subsection{Semidefinite program}
First, we will reformulate our problem in terms of the geometric measure, which is also useful for the see-saw algorithm we will discuss later.
Next we will explain the relaxation which allows us to use semidefinite programming.
\subsubsection{Reformulation in terms of the geometric measure}
For a given pure quantum state $\ket{\psi}$, we want to maximize the 
following expression
\begin{align}
\Omega^2 = \sup_{\ket{\eta}}|\braket{\psi}{\eta}|^2,
\label{eq-appendix:overlap}
\end{align}
where $\ket{\eta}$ has a specific structure with respect to the partition rank. We now fix the structure as $\ket{\eta} = \ket{\varphi_1}_A\ket{\phi_1}_{BC}+\ket{\phi_2}_{AB}\ket{\varphi_2}_{C}$, where $\ket{\eta}$ is a normalized state. The algorithm can be analogously adapted to any other partition rank structure of $\ket{\eta}$.

In order to formulate Eq.~\eqref{eq-appendix:overlap} as an SDP, we first need to express $\ket{\eta}$ as an unnormalized product state. To that end, we define a matrix operator $E$, such that
\begin{align}
   \ket{\eta} = E\ket{s}\ket{L},
\label{eq:Emat}
\end{align}
where $\ket{s} = \ket{\varphi_1}_A \oplus \ket{\varphi_2}_C$ 
and $\ket{L} = \ket{\phi_1}_{BC} \oplus \ket{\phi_2}_{AB}$. 

Let us explicitly construct the matrix $E$ such that Eq.~\eqref{eq:Emat} 
is fulfilled. To that end, we think of $E$ as a map that takes the 
$2d \times 2d^2$-dimensional vector $\ket{s}\ket{L}$, living on a
tensor product of direct sums of spaces and maps it to a 
$d^3$-dimensional vector $\ket{\eta}$, living on the actual 
Hilbert space.  We define two projectors $\Pi_{1}$ and $\Pi_2$ 
which allow us to project onto the first term and second term 
of the decomposition, 
\begin{equation}
    \ket{\varphi_1}_A \ket{\phi_1}_{BC} \propto \Pi_1 |s\rangle |L\rangle \quad \mathrm{ and } \quad \ket{\phi_2}_{AB} \ket{\varphi_2}_C \propto \Pi_{2} |s\rangle |L\rangle.
\end{equation}
In the following we will first construct these projectors onto the space of 
dimension $2d \times 2d^2 = 4d^3$; second, we choose an embedding 
into the $d^3$-dimensional physical Hilbert space.

The first projection $\Pi_{1}$ can be written as:
\begin{equation}
    \Pi_{1} = Q^{\mathrm{s}}_1\otimes Q^{\mathrm{L}}_1,
\end{equation}
where $Q^{s/L}_1$ denotes projection on the first term $\ket{\varphi_1}_A \ket{\phi_1}_{BC}$, and the indices $s$ and $L$ refer to the vectors $\ket{s}$ and $\ket{L}$. In this case, we differentiate between a small system, containing the one-party vectors, and the large system, containing the two-party vectors. The projections are then given by,

\begin{equation}
    Q^{\mathrm{s}}_1 = \begin{pmatrix}
    \mathds{1}_{d} & \mathbb{O}_{d} \\
    \mathbb{O}_{d} & \mathbb{O}_{d}
    \end{pmatrix}, \qquad Q^{\mathrm{L}}_1 = \begin{pmatrix}
    \mathds{1}_{d^2} & \mathbb{O}_{d^2} \\
    \mathbb{O}_{d^2} & \mathbb{O}_{d^2}
    \end{pmatrix},
\end{equation}
where $\mathds{1}_d$ is the $d\times d$ identity matrix and $\mathbb{O}_d$ the $d\times d$ zero matrix.

For the second projector $\Pi_{2}$, we can define the projection operators similarly to before 
\begin{equation}
    Q^{\mathrm{s}}_2 = \begin{pmatrix}
    \mathbb{O}_{d} & \mathbb{O}_{d} \\
    \mathbb{O}_{d} & \mathds{1}_{d}
    \end{pmatrix}, \qquad Q^{\mathrm{L}}_2 = \begin{pmatrix}
    \mathbb{O}_{d^2} & \mathbb{O}_{d^2} \\
    \mathbb{O}_{d^2} & \mathds{1}_{d^2}
    \end{pmatrix}.
\end{equation}

Apart from applying these operators we additionally need to permute 
the subsystems, such that they are in the correct order $ABC$. And further, 
we have to ensure that the coefficients are in the right order as well, 
namely, following the notation according to the computational basis, the 
first entry in the vector should be mapped to $\ket{000}$, the second 
entry to $\ket{001}$ and so on. This can reached by a permutation $\pi$ 
of the $4d^3$  rows of $Q^{\mathrm{s}}_2 \otimes Q^{\mathrm{L}}_2$ 
and we obtain
\begin{equation}
\Pi_{2} = M_\pi \circ  [Q^{\mathrm{s}}_2 \otimes Q^{\mathrm{L}}_2],
\end{equation}
where $M_\pi$ denotes the permutation matrix. Note that this permutation
makes the operator $\Pi_{2}$ a nonlocal operator with respect to the 
tensor product structure $\ket{s}\ket{L}.$

Lastly, in order to ensure that $E$ maps from a $4d^3$-dimensional space 
to a $d^3$-dimensional space, we embed the operators accordingly. This 
is expressed by applying the operator $\Gamma$, which essentially 
eliminates the rows of the operators $\Pi_1$ and $\Pi_2$ that contain 
only zeros,
\begin{equation}
    E = \Gamma \left ( \Pi_1 + \Pi_2 \right ).
\end{equation}
The total operator $E$ is then a $d^3 \times 4d^3$ matrix and we can 
reformulate
\begin{align}
    \sup_{\ket{\eta}} |\braket{\psi}{\eta}|^2 &= \sup_{\ket{s}\ket{L}}|\bra{\psi}E\ket{sL}|^2\\
    &= \sup_{\ket{s}\ket{L}}|\bra{\Psi\ket{sL}}^2 ,
\end{align}
where $\bra{\Psi} = \bra{\psi}E$. Note that we have the normalization constraint
$\braket{\eta}{\eta}=\bra{sL}E^\dagger E \ket{sL}=1$ here.

\subsubsection{SDP relaxation}
Now we can relax the condition that $\ketbra{sL}{sL}$ is a separable state, to the condition that it has a positive partial transpose and replace $\ketbra{sL}{sL}\mapsto \sigma$. Then
\begin{equation}
    \sup_{\ket{s}\ket{L}}|\bra{\Psi\ket{sL}}^2\leq \sup_{\sigma_\mathrm{PPT}}\tr(\ketbra{\Psi}{\Psi}\sigma_{\mathrm{PPT}}).
\end{equation}
This allows us to formulate the problem as an SDP:
\begin{align}
    \sup_{\sigma} &\tr(\ketbra{\Psi}{\Psi}\sigma)\\
    \mbox{subject to: } &\sigma\quad \mathrm{positive~semidefinite},\\ 
    & \sigma \quad \mathrm{PPT},\\   &\tr (E\sigma E^\dagger)=1,\label{eq:normalSDP}\\   
    & \tr ( \sigma ) = 2\label{eq:normSDP2},
\end{align}
where  the condition in Eq.~\eqref{eq:normalSDP} encodes the normalization, which distinguishes this program from the one computing bounds on the geometric measure of entanglement and will be recovered in the see-saw approach. The second normalization condition in Eq.~\eqref{eq:normSDP2} does not directly follow from our reformulation but is crucial for the PPT relaxation. Precisely, since we map the state $\ket{\eta}$ from
a $d^3$-dimensional space to a the way larger direct sum space of dimension $4d^3$, performing a PPT relaxation without additional constraint might cause a bad approximation.
However, demanding $\tr(\sigma)=2$ in the direct sum space does not add any contraints to the standard space, where $\ket{\eta}$ lives. This is because as shown in Appendix~\ref{app:orth}, any partition rank decomposition of a three-partite state can be
orthogonalized. Then we can for example choose the $\ket{\varphi_i}$, which are the entries of the lifted vector $\ket{s}$ to be of norm one, while the $\ket{\phi_i}$ will have norm $\alpha_i$. We then obtain
\begin{align}
    \tr(\sigma) = \braket{s,L}{s,L}= \sum_i \norm{\ket{\varphi_i}}^2\sum_i \norm{\ket{\phi_i}}^2 = 2(|\alpha_1|^2+|\alpha_2|^2)=2.
\end{align}
Note that for any $\ket{\eta}$ of partition rank $k$, or slice rank $k$ for $N$ particles,
we can impose the constraint $\tr(\sigma)=k$ without restricting the actual set of $\ket{\eta}$, we are optimizing over. For arbitrary partition rank of $N>3$ particles, it remains open to find a similar suitable condition.

\subsection{See-Saw algorithm}
We now consider another method, based on a see-saw algorithm and show how this is related to the SDP approach. Again, we want to maximize
\begin{align}
    \Omega^2 = \sup_{\ket{\eta}}|\braket{\psi}{\eta}|^2,
\end{align}
and fix the structure of $\ket{\eta}$ to $\ket{\eta} = \ket{\varphi_1}_A\ket{\phi_1}_{BC}+\ket{\phi_2}_{AB}\ket{\varphi_2}_{C}$. 

As before, we can rewrite the problem by rearranging the parts of the decomposition of $\ket{\eta}$ into two vectors  $\ket{s} = \ket{\varphi_1}_A \oplus \ket{\varphi_2}_C$ 
and $\ket{L} = \ket{\phi_1}_{BC} \oplus \ket{\phi_2}_{AB}$.  The idea of the see-saw algorithm is now to first fix the vector $\ket{L}$ and find the optimal $\ket{s}$ and then fix $\ket{s}$ and find the optimal $\ket{L}$. This can then
be iterated.

Let us fix $\ket{L}$. In order to rewrite the maximization problem $\sup_{\ket{\eta}}|\braket{\psi}{\eta}|^2 
{=}\sup_{\ket{sL}}|\braket{\Psi}{sL}|^2$  
as a see-saw optimization, we first compute 
$\bra{V} = \braket{\Psi}{L}$ and obtain $\bra{V} = \bra{v_1} \oplus \bra{v_2}$.
At first sight, a natural choice is now to take $\ket{s} \sim \ket{V}$,
but we have to take the special normalization into account. 

Since $\ket{\eta}$ is normalized but the vectors $\ket{\varphi_i}$ 
and $\ket{\phi_i}$ (and therefore $\ket{s}=\ket{s_1}\oplus \ket{s_2}$ 
and $\ket{L}$) are not, the normalization condition can be written as
\begin{align}
    \braket{\eta}{\eta} = &\bra{\varphi_1}_A\bra{\phi_1}_{BC}\ket{\phi_1}_{BC}\ket{\varphi_1}_A + \bra{\varphi_1}_A\bra{\phi_1}_{BC}\ket{\phi_2}_{AB}\ket{\varphi_2}_C %\nonumber \\
    %&
    +\bra{\varphi_2}_C\bra{\phi_2}_{AB}\ket{\phi_1}_{BC}\ket{\varphi_1}_A +\bra{\varphi_2}_C\bra{\phi_2}_{AB}\ket{\phi_2}_{AB}\ket{\varphi_2}_C 
    \nonumber \\
    = &\sum_{ij} \bra{s_i}M_{ij}\ket{s_j} =1,
\end{align}
where $M$ is an operator-valued $2 \times 2 $ matrix given by
\begin{align}
    M = \left(
    \begin{array}{cc}
       \bra{\phi_1}_{BC}\ket{\phi_1}_{BC} \times \mathds{1}_A & \bra{\phi_1}_{BC}\ket{\phi_2}_{AB} \\
       \bra{\phi_2}_{AB}\ket{\phi_1}_{BC}  & \bra{\phi_2}_{AB}\ket{\phi_2}_{AB} \times \mathds{1}_C 
    \end{array}
    \right),
\end{align}
where each entry is a $d\times d$ block and obtained by contracting 
the vectors $\ket{\phi_i}$ with respect to the appropriate subsystems. 

So, we have to solve the maximization problem
\begin{align}
    \sup_{\ket{s}} |\braket{V}{s}|,\,\,\text{subject to}\,\,\bra{s}M\ket{s}=1.
\end{align}
Note that this would be solved by taking $\ket{s} \sim \ket{V}$ in the case
of $M=\mathds{1}$, which is exactly the see-saw algorithm for the geometric measure of entanglement.

For the case $M\neq \mathds{1}$, we can use the Lagrange multiplier formalism to identify the optimal vector $\ket{s}$. With the Lagrange function
\begin{align}
    \mathcal{L} = \braket{V}{s}-\lambda (\bra{s}M\ket{s} -1)
\end{align}
we obtain the result \cite{lagrage1,lagrange2}
\begin{align}
    \lambda^2 &= \bra{V}M^{-1}\ket{V}\\
    \bra{s} &= \frac{\bra{V}M^{-1}}{\lambda}.
\end{align}
Hence, the optimal vector $\ket{s}$ is given by the hermitian conjugate
\begin{align}
    \ket{s} = \Big(\frac{\bra{V}M^{-1}}{\sqrt{\bra{V}M^{-1}\ket{V}}}\Big)^\dagger
    =
     \Big(\frac{M^{-1} \ket{V}}{\sqrt{\bra{V}M^{-1}\ket{V}}}\Big),
    \label{eq:pvec}
\end{align}
since $M$ is hermitean. 
Then the first $d$ entries of $\ket{s}$ define the updated vector $\ket{\varphi_1}_A = \ket{s_1}$ and the next $d$ entries $\ket{\varphi_2}_C= \ket{s_2}$.
Next, we repeat the procedure fixing the $\ket{\varphi_i}$, i. e. the vector $\ket{s}$ and updating the $\ket{\phi_i}$ (the vector $\ket{L}$).

Note that for non-singular $M$ this is always well-defined as the expectation values of $M$ are real and nonzero. When updating the larger system, the matrix $M$ becomes singular and therefore is not invertible. We can overcome this problem by using the pseudoinverse $M^+$ instead. The problem is then well-defined, if $\ket{V}$ is in the range of $M$, since then $M^+ \ket{V} = M^+MM^+\ket{V} = M^+ \Pi_\mathrm{range}\ket{V}\neq 0$. We note that in practice this is always the case, since $\ket{V}$ is a generic vector.

\subsection{Perturbation-theoretic algorithm}
We introduce an algorithm for finding partition rank $k$ decompositions, 
based on the Gauss-Newton-type method, which supplements the previously 
introduced see-saw method and can deliver lower bounds on $\Omega^2$.
We perform the optimization over decompositions  
in the following way.

We take one ansatz state $\ket{\gamma_P}$ for each partition $P\in\mathcal{P}$,
and optimize these states, under the constraint
$\sum_P \ket{\gamma_P} = \ket{\eta}$. {Said 
constraint is implemented by singling out a 
specific partition $P^*$ and setting $\ket{\gamma_{P^*}} = \ket{\eta}-\sum_{P\neq P^*}\ket{\gamma_P}$, as an implementation by parametrization saves resources in the numerical implementation.} The partition $P^*$ varies during
the iteration and in each step the other partitions 
are updated in the following way.

If the partition $P$ can 
be written as $P = M | \overline M$ we consider the Schmidt decomposition
$\ket{\gamma_P} = \sum_i \sigma^{P}_i \ket{u^{M}_i} \otimes \ket{v^{\overline M}_i}$ across the partition $P = M | \overline M$. {Note that this decomposition might initially not satisfy the required partition rank vector, but will do so asymptotically, if the PRV is possible for the state.}
The goal is to modify $\ket{\gamma_P}$ such that all but $\rr_P$ of these 
Schmidt coefficients are zero, where $\rr_P$ is the Schmidt rank of the 
contribution in the target state $\ket{\eta}$; in the case given above we 
aim to reach $\rr_{A|BC}=1$, $\rr_{B|AC}=0$ and $\rr_{C|AB}=1$. Note that the 
approach uses a natural bipartite factorization of the Hilbert space into 
$\mathcal{H}_{M}\otimes\mathcal{H}_{\overline M}$ and hence the number of components in $\sigma^P_i$ is 
$d^{\min}_P =\min(\dim(\mathcal{H}_{M}),\dim(\mathcal{H}_{\overline M}))$.
From the fixed structure of $\ket{\eta}$, we know how many $\sigma^{P}_i$ across each bipartition ought to be zero for a valid decomposition, concretely the smallest $d^{\min}_{P}-\rr_P$ of them.

We collect all the  $\ket{\gamma_P}$ in a large vector,
$\ket{\bm{\gamma}} = \bigoplus_P\ket{\gamma_P}$, and collect
also all of the Schmidt coefficients that should vanish in 
a vector,
\begin{equation}
    \bm{\sigma}:=\bigoplus_{P\in \mathcal{P},i>\rr_P} \sigma^P_i. 
\end{equation}
Suppose now that all components of $\bm{\sigma}$ take on distinct 
values. In this specific case, we can 
use first-order perturbation theory to calculate the derivative: 
\begin{equation}
    \bm{J}:= \frac{\partial\bm{\sigma}}{\partial\bra{\bm{\gamma}}} = \bigoplus_P\frac{\partial\bm{\sigma}}{\partial\bra{\gamma_P}} = \bigoplus_{P\in \mathcal{P},i>\rr_P} \bra{u^{P_{1}}_i}\otimes\bra{v^{P_{2}}_i}
\end{equation}
We can then compute updates $\ket{\bm{\gamma}}\mapsto\ket{\bm{\gamma}}+\ket{\bm{\delta}}$ 
by solving
\begin{equation}    \bm{J}\ket{\bm{\delta}} = -\bm{\sigma},
\end{equation}
resulting in a type of Gauss-Newton method. Similar ideas have been presented in Ref.~\cite{magne2010numerical}. 

\section{Optimal and tight partition rank decompositions}
\label{app:dec}
In this appendix we will give some details on optimal partition rank decompositions. Precisely, we will show
that if a state is in its optimal partition rank decomposition, then it can be interpreted as a sum of states which are written in
the Schmidt decomposition with respect to the partition that minimizes the Schmidt rank. 
The entires of the optimal partition rank vector, will then correspond to the GME-dimension of these states.
We will then see that this is not a sufficient condition for a partition rank vector (and its assigned decomposition) to be optimal,
which motivates the definition of tight PRVs.
\begin{lemma}
Any state with partition rank $r_\mathcal{P}$ can be written in the form
\begin{align}
    \ket{\psi} = \sum_{P\in \mathcal{P}} \ket{\xi_P},
\end{align}
where each unnormalized state $\ket{\xi_\mathrm{P}}$ has partition rank equal to its minimial Schmidt rank (GME-dimension). The index $P$ denotes the respective partition in which the decomposition of the $\ket{\xi_\mathrm{P}}$ is optimal. Then, the partition rank is given by the sum of the Schmidt ranks $\rr_P$ $(=d_\mathrm{GME})$ with respect to bipartition $P$ of the terms $\ket{\xi_\mathrm{P}}$
\begin{align}
    r_\mathcal{P}(\ket{\psi}) = \sum_{P\in \mathcal{P}} \rr_P(\ket{\xi_P}) =\sum_{P\in \mathcal{P}} d_{\mathrm{GME}}( \ket{\xi_P})
\end{align}
and the assigned partition rank vector is optimal with entires $\rrp =d_\mathrm{GME}(\ket{\xi_P})$.
\label{th:declem}
\end{lemma}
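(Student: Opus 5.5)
Start from an optimal decomposition $\ket{\psi}=\sum_{i=1}^{r_\PP}\ket{\nu_i^P}$ as in Eq.~(\ref{eq:partitionrank}), so the number of terms equals $r_\PP(\ket{\psi})$. Assign each term to one partition $P\in\PP$ for which it is separable; if a term is separable for several partitions, choose one arbitrarily. Then group the terms by this assignment and set $\ket{\xi_P}=\sum_{i:\,P(i)=P}\ket{\nu_i^P}$. With this grouping, $\ket{\psi}=\sum_P\ket{\xi_P}$. Let $n_P$ be the number of terms assigned to $P$, so that $\sum_P n_P=r_\PP(\ket{\psi})$. By construction, $\ket{\xi_P}$ is a sum of $n_P$ product vectors for the bipartition $P$, hence its Schmidt rank satisfies $\rr_P(\ket{\xi_P})\le n_P$.

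Next I would show that equality holds everywhere, using minimality. The key chain of inequalities is
\begin{align}
r_\PP(\ket{\psi})\le \sum_P r_\PP(\ket{\xi_P})\le \sum_P d_{\mathrm{GME}}(\ket{\xi_P})\le \sum_P \rr_P(\ket{\xi_P})\le \sum_P n_P = r_\PP(\ket{\psi}).
\end{align}
The individual steps are justified as follows. The first inequality is subadditivity: concatenating optimal decompositions of the $\ket{\xi_P}$ gives a decomposition of $\ket{\psi}$. The second is Observation 1 applied to each (unnormalized) $\ket{\xi_P}$; it holds for unnormalized vectors by scaling. The third holds because $d_{\mathrm{GME}}$ is the minimum Schmidt rank over bipartitions. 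The fourth holds because the Schmidt rank is the minimal number of product terms.

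Since the chain starts and ends at $r_\PP(\ket{\psi})$, all inequalities are equalities. The sums are equal termwise only in total, so to get termwise equality I would use that each individual inequality holds term by term, $r_\PP(\ket{\xi_P})\le d_{\mathrm{GME}}(\ket{\xi_P})\le\rr_P(\ket{\xi_P})\le n_P$. If any of these were strict for some $P$, the total would be strict, which is a contradiction. Hence for every $P$ we get $r_\PP(\ket{\xi_P})=d_{\mathrm{GME}}(\ket{\xi_P})=\rr_P(\ket{\xi_P})=n_P$.

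This yields all claims of the lemma. Each $\ket{\xi_P}$ has partition rank equal to its GME-dimension. The partition $P$ attains this minimal Schmidt rank, so $P$ is the partition in which the (Schmidt) decomposition of $\ket{\xi_P}$ is optimal. Finally, $r_\PP(\ket{\psi})=\sum_P\rr_P(\ket{\xi_P})=\sum_P d_{\mathrm{GME}}(\ket{\xi_P})$, so the PRV with entries $\rrp=n_P$ sums to $r_\PP$ and is therefore optimal by definition.

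The main obstacle is mild bookkeeping rather than a hard step. Terms separable for several partitions must be assigned consistently, and Observation 1 and the definition of $d_{\mathrm{GME}}$ must be applied to unnormalized, possibly zero vectors. If some $\ket{\xi_P}$ vanishes, then $n_P=0$, and I would simply drop that term, with the convention that all of its ranks are zero. One should also note that the sum over $P$ may be restricted to bipartitions, as remarked in the main text, so that $\rr_P$ is well defined.
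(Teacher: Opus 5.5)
Your proof is correct and uses the same idea as the paper's. You group an optimal decomposition by partition and use the minimality of $r_{\PP}$ to rule out that any group $\ket{\xi_P}$ could be rewritten with fewer terms; your chain of inequalities makes this contradiction explicit. It also derives $\rr_P(\ket{\xi_P})=n_P$ and $r_{\PP}(\ket{\xi_P})=d_{\mathrm{GME}}(\ket{\xi_P})$ directly from minimality, whereas the paper appeals to a Gram--Schmidt orthogonalization that by itself does not force the Schmidt rank to equal the number of terms.
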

\begin{proof}
    Consider a state $\ket{\psi}$ which has partition rank $r_\mathcal{P} = \sum_{P\in \mathcal{P}} \rr_P(\ket{\xi_P})$. We can decompose the state into a sum of $r_\mathcal{P}$ biseparable states and summarize all $\rr_P$ states that are biseparable with respect to the same partition $P$
    \begin{align}
        \ket{\psi} = \sum_{P\in\mathcal{P}}\Big( \sum_{i=1}^{\rr_P}\ket{\nu_{i}^P}\Big)
        = \sum_{P\in \mathcal{P}} \ket{\xi_P}.
    \end{align}
    The terms $\ket{\nu_{i}^P}$ can be orthogonalized by the Gram-Schmidt method, hence each state $\ket{\xi_P}$ has Schmidt rank $\rr_P$ with respect to the bipartition $P$. We further find by contradiction that $\rr_P(\ket{\xi_P})=d_\mathrm{GME}(\ket{\xi_P})$ must hold. Namely, if the state $\ket{\xi_P}$ had a smaller Schmidt rank with respect to another bipartition we could express it with less terms and therefore the whole state $\ket{\psi}$ would have a partition rank smaller than $r_\mathcal{P}$. Since we assumed that $\ket{\psi}$ has patition rank $r_\mathcal{P}$ we find that indeed all $\ket{\xi_P}$ have minimal Schmidt rank with respect to bipartition $P$ and therefore  
    \begin{align}
    r_\mathcal{P}(\ket{\psi}) = \sum_{P\in \mathcal{P}} \rr_P(\ket{\xi_P}) =\sum_{P\in \mathcal{P}} d_{\mathrm{GME}}(\ket{\xi_P}).
\end{align} 
\end{proof}
Note that Lemma~\ref{th:declem} does not imply the converse: If we have a state of the form $\ket{\psi} = \sum_{P\in \mathcal{P}}\ket{\gamma_P}$, we cannot say that its partition rank is given by the sum of the GME-dimensions of the $\ket{\gamma_P}$. Consider for example the four-qubit state
\begin{align}
\ket{\psi}=\ket{\Phi^+}_{\mathrm{AB}}\ket{\Phi^+}_{\mathrm{CD}} = 
     \ket{\Phi^+}_{\mathrm{AC}}\ket{\Phi^+}_{\mathrm{BD}}+\ket{\Psi^-}_{\mathrm{AD}}\ket{\Psi^-}_{\mathrm{BC}},
\end{align}
where $\ket{\Psi^-}\propto \ket{01}-\ket{10}$ and $\ket{\Phi^+}\propto \ket{00}+\ket{11}$. Both terms are biseparable with respect to a different bipartition and their GME dimension is equal to the Schmidt rank in that bipartition, so one might assume that $\ket{\psi}$ has partition rank 2. However, in fact $\ket{\psi}$ is biseparable with respect to the bipartition $AC|BD$ and has therefore partition rank 1. Indeed, the decomposition on the right hand side is tight with the assigned PRV $\mathbf{r}=(0,0,0,0,0,1,1)$ is tight
but not optimal, whereas the decomposition on the left-hand side, following the PRV $\mathbf{r}=(0,0,0,0,1,0,0)$ is optimal.

\section{Algebraic certification}
\label{app:cert}
In the following, we explain an algebraic-geometric method to find lower bounds on the partition rank of a fixed $\ket{\psi}$. 
Concretely, we can also lower-bound partition rank vectors associated to decompositions of $\ket{\psi}$.
For that, we use Hilbert's Nullstellensatz, which links inconsistency or unsatisfiability of a collection of polynomial equations to the existence of a solution to a linear system of equations over polynomials. Concretely, a collection of polynomial equations generates an ideal, and if this ideal contains the constant polynomial, the equations are inconsistent. The method we describe here can be viewed as a generalization of the approach in Ref.~\cite{johnston2022complete}.

First, we assign coordinates $\bm{\gamma}:=(\gamma^{(P)}_{jk})_{j,k,P}$ to a set of vectors in a potential decomposition, $\ket{\gamma_P} := \sum_{jk}\gamma^{(P)}_{jk}\ket{jk}$. Then we define $g_0(\bm{\gamma}) = \sum_P\ket{\gamma_P} - \ket{\psi}$. 
Note that this is inhomogeneous in $\bm{\gamma}$, circumventing the issue of normalization. That is because if we were to allow decompositions of the 
entire ray $\alpha\ket{\psi}$, then we would always find that $\ket{\gamma}=0$ 
is a decomposition for $\alpha=0$.

We then gather the set of all minors $\{g^{(P)}_{a_P}(\bm{\gamma})\}_{P, 1\le a_P \le N^c_P}$ corresponding to a partition rank vector $\bm{r}$. This means
that, for every fixed $P$, the determinants of sub-blocks with size $(\rr_P+1)$ in the matrix $\gamma^{(P)}_{jk}$ are considered and enumerated with an 
index $a_P$, where $N^c_P$ is the number of such minors. The variety $\{\bm{\gamma} | g^{(P)}_{a_P}(\bm{\gamma})=0\}$  contains 
decompositions with partition rank vector $\leq \bm{r}$. This means
that any such decomposition of $\ket{\psi}$ solves all of the homogeneous equations $\{g_a(\bm{\gamma})=0\}$, including $g_0$, so that the set of decompositions of a state has the correct Schmidt rank constraints imposed. 

However, we can also consider the polynomials generated by assigning to every partition one specific
minor $g^{(P)}_{a_P}$ and multiplying them, $\tilde{g}_a = \prod_P g^{(P)}_{a_P}$. The associated variety $\{\bm{\gamma} | \tilde{g}_a(\bm{\gamma})=0\}$ also contains all decompositions $\not\geq\bm{r}$, since such expressions are only nonzero if all factors are nonzero, which is only the case if all Schmidt ranks of $\ket{\gamma_P}$ are incremented with respect to the test vector $\bm{r}$. Alternatively, varieties of a given partition rank vector can be considered directly, by not multiplying the minors $g^{(P)}_{a_P}$ as above, but inserting them directly as equations. This imposes Schmidt rank constraints on the individual terms $\ket{\gamma_P}$. 

Now, instead of proving that a variety $\{\bm{x}|G_a(\bm{x})=0\}$ is empty, we can equivalently show that $\{\sum_a G_a(\bm{x})Q_a(\bm{x})|Q_a(\bm{x})\in \mathbb{C}[\bm{x}]\}$, the ideal generated by the $G_a$, contains the constant polynomial. Here, $\mathbb{C}[\bm{x}]$ is the ring of all polynomials with complex coefficients in variables $\bm{x}$. 
That is, we attempt to find $Q_a(\bm{\gamma})$ 
solving 
\begin{equation}
    \sum_{a=0}^{N_c}  Q_a(\bm{\gamma})G_a(\bm{\gamma}) =1.
\end{equation}
Essentially, for such an inconsistency certificate, one performs simplification operations on a set of polynomial constraints, to show that one of the constraints is "$1=0$", which, is clearly not satisfiable. Expanding the above expression in the monomial basis of bounded degree gives rise to a hierarchy of linear equations where any solution certifies the emptiness of the variety generated by the aforementioned polynomials. Thus, we can obtain lower bounds on partition rank vectors, and hence on the partition rank.

To demonstrate our method in practice, we now apply it to prove that no decomposition following the partition rank vector $\mathbf{r}=(1,1,1)$ exists for the state
\begin{equation}
    \ket{\mathrm{GHZS}}_{\mathrm{ABC}}=  \frac{1}{3}\sum_{i=0}^2\ket{iii}_{\mathrm{ABC}} + \frac{1}{3}\sum_{ijk}\epsilon_{ijk}\ket{ijk}_{\mathrm{ABC}}.
\end{equation}
To this end, we set up the associated system of degree-$2$ minors and multiply it with the set of monomials of degree $\le2$. Now we search for a linear combination of these polynomials that will result in the constant polynomial. 
One solve this system of linear equations with floating point numbers, and then remove all columns whose coefficients are close to zero. Repeating this considerably lowers the size of the linear system. Finally, we randomly select columns and set their coefficients to zero, while preserving consistency of the system of linear equations. This finally results in a highly sparse solution which can be rounded to a rational solution. Such a rational solution can be exactly verified in contrast to a floating point solution, proving that the set of decompositions following the vector $\mathbf{r}=(1,1,1)$ of $\ket{\mathrm{GHZS}}_{\mathrm{ABC}}$ is empty. Details and programs for this 
calculation are available upon request.

We note that the above method cannot distinguish decompositions defined as a limit of finite decompositions (that is, border partition rank), but not associated to any decomposition with finite terms summing up to the normalized state. A different method can be used for excluding well-behaved decompositions: Here, we assign to every partition in the partition rank vector with nonzero Schmidt rank a quadratic term $\ket{\gamma_P} = \sum_{i=0}^{\rr_P-1}\ket{a^{(P)}_i}_{P_1}\otimes\ket{b^{(P)}_i}_{P_2}$, then we require $\sum_P\ket{\gamma_P}=\ket{\psi}$, which gives a system of quadratic equations in $\ket{a^{(P)}_i}_{P_1}$, $\ket{b^{(P)}_i}_{P_2}$. If we now use the Nullstellensatz to acquire a certificate of inconsistency for these equations it will not include the diverging limit points, since now the set belonging to the partition rank vector is parametrized directly, without imposing constraints from the outside, such that the case of $\ket{\gamma_P}$ diverging is not equated to a solution. As an illustration, consider the equations $xy=1$ and $y=0$, which have the pathological solutions $(x,y)=(\pm\infty,0)$, if $x=\nicefrac{1}{\pm y}$. Regardless, $(x)(y) + (-1)(xy-1)=1$ shows that $1$ is contained in the ideal generated by the two equations, such that they are inconsistent.

%%%%%%%%%%%%%%%%%%%%%%%%%%%%%%%%%%%%%%%%%%%
\section{Proof of Observation 2}
\label{app:AME}
%%%%%%%%%%%%%%%%%%%%%%%%%%%%%%%%%%%%%%%%%%%%
We now want to prove Observation 2 from the main text, which states that for three particles, the geometric measure of partition rank 2 is maximized if and only if $\ket{\psi}$ is an absolutely maximally entangled state. To do so, we first give an alternative, more explicit, proof of the fact that all partition rank two decompositions of three particles can be orthogonalized.
\begin{lemma}
\label{th:orthon}
    Every partition rank two state on three parties has an orthogonal decomposition, i.e. $\ket{\psi}_{ABC} = \ket{\nu_1^{P_1}}+\ket{\nu_2^{P_2}}$, such that $\braket{\nu_1^{P_1}}{\nu_2^{P_2}}=0$, where $\ket{\nu_i^{P_i}}$ are unnormalized states and separable with respect to partition $P_i$.
\end{lemma}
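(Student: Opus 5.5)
We prove this by case analysis on the two partitions, using only elementary projections instead of the general slice-rank machinery of Observation~\ref{obs:orth}. Write $\ket{\psi}=\ket{\nu_1}+\ket{\nu_2}$ with each term separable in its partition.

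\emph{Case 1: $P_1=P_2$, say both $A|BC$.} Then $\ket{\psi}=\ket{a_1}\ket{\phi_1}+\ket{a_2}\ket{\phi_2}$ has Schmidt rank at most two across $A|BC$. The Schmidt decomposition gives two orthogonal product terms $s_1\ket{u_1}\ket{v_1}+s_2\ket{u_2}\ket{v_2}$, and we are done.

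\emph{Case 2: $P_1\neq P_2$.} Up to relabelling, $\ket{\psi}=\ket{a}_A\ket{\phi}_{BC}+\ket{\chi}_{AB}\ket{c}_C$, and we may assume $\ket{a}\neq 0$. Let $\Pi_A=\ketbra{\hat a}{\hat a}$ with $\ket{\hat a}=\ket{a}/\norm{\ket a}$, and write $\Pi_A^\perp=\mathds{1}-\Pi_A$. Split
\begin{align}
\ket{\psi}=(\Pi_A\otimes\mathds{1}_{BC})\ket{\psi}+(\Pi_A^\perp\otimes\mathds{1}_{BC})\ket{\psi}.
\end{align}
\begin{itemize}
\item The first term equals $\ket{\hat a}_A\otimes(\bra{\hat a}\otimes\mathds{1})\ket{\psi}$, so it is separable with respect to $A|BC$.
\item The second term equals $(\Pi_A^\perp\otimes\mathds{1}_B)\ket{\chi}_{AB}\ket{c}_C$, because $\Pi_A^\perp\ket{a}=0$. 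It is therefore separable with respect to $C|AB$.
\end{itemize}
The two terms are orthogonal, since they lie in the ranges of the complementary projectors $\Pi_A\otimes\mathds{1}$ and $\Pi_A^\perp\otimes\mathds{1}$. Setting $\ket{\nu_1^{P_1}}=(\Pi_A\otimes\mathds{1})\ket{\psi}$ and $\ket{\nu_2^{P_2}}=(\Pi_A^\perp\otimes\mathds{1})\ket{\psi}$ gives the claim. This is the rank-one instance of the projector identity used in the proof of Observation~\ref{obs:orth}.

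\emph{Degenerate cases.} If $\ket a=0$, then $\ket\psi$ is a single separable term. We can take that term as $\ket{\nu_1}$ and $\ket{\nu_2}=0$, which is trivially orthogonal to it. Likewise, either projected term may vanish, which is harmless. The only real subtlety is keeping track of which factor gets annihilated. Choosing the projector on the party carried by the single-party factor of one term does this, so no step should be a genuine obstacle. I would also note that the construction makes the norms additive, $\norm{\ket{\nu_1}}^2+\norm{\ket{\nu_2}}^2=1$, which is the property needed later for Observation~2.
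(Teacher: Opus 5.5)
Your proof is correct. It reaches the result by a different route than the paper's proof of this lemma, although both end up with the same decomposition.

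The paper writes $\ket{\psi}=\ket{a}_A\ket{\phi_1}_{BC}+\ket{b}_B\ket{\phi_2}_{AC}$ and shifts a product term $\ket{a}\ket{b}\ket{c}$ from one summand to the other. It then computes the overlap of the two new terms as a function of $\ket{c}$, factorizes it as an inner product of two vectors on $C$, and reads off two roots for $\ket{c}$. Substituting the first root, $\ket{c}=(\bra{a}\otimes\mathds{1}_C)\ket{\phi_2}/\braket{a}{a}$, gives exactly $\ket{\tilde\nu_1}=(\Pi_A\otimes\mathds{1})\ket{\psi}$ and $\ket{\tilde\nu_2}=(\Pi_A^\perp\otimes\mathds{1})\ket{\psi}$. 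This is your split, with the paper's $B|AC$ term in place of your $C|AB$ term. The second root is the mirror image, with the projector placed on $B$.

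What your projector argument buys:
\begin{itemize}
\item It needs no calculation.
\item Orthogonality and $\norm{\ket{\nu_1}}^2+\norm{\ket{\nu_2}}^2=1$ are immediate.
\item As you note, it is the rank-one case of the construction in Observation~\ref{obs:orth}. It therefore carries over directly to slice rank two for $N$ parties (the corollary in Appendix~\ref{app:slice}) and to higher ranks.
\end{itemize}
What the paper's computation buys: it parametrizes all rewritings with $\ket{a},\ket{b}$ fixed and shows the orthogonalizing choice is not unique. That is extra information, but it is not needed for the lemma or for its use in Observation~2.
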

\begin{proof}
If the two states $\ket{\nu_1}$
and $\ket{\nu_2}$ are biseparable for the same bipartition, the 
statement follows from the properties of
the standard Schmidt decomposition. Otherwise, we consider a three-partite state of partition rank two which, without loss of generality, can be written as
\begin{equation}
    \ket{\psi}_{ABC} = \ket{\nu_1^{A|BC}} + \ket{\nu_2^{B|AC}} = \ket{a}_A\ket{\phi_1}_{BC} + \ket{b}_B\ket{\phi_2}_{AC}.
\end{equation}
We can now add a zero term without changing the state,
\begin{align*}
    \ket{\psi}_{ABC} &= \ket{a}_A\ket{\phi_1}_{BC} + \ket{b}_B\ket{\phi_2}_{AC} + \ket{a}_A\ket{b}_B\ket{c}_C - \ket{a}_A\ket{b}_B\ket{c}_C\\
     &= \ket{a}_A \left (\ket{\phi_1}_{BC}+\ket{b}_B\ket{c}_C \right ) + \ket{b}_B \left(\ket{\phi_2}_{AC} -  \ket{a}_A\ket{c}_C \right )\\
     :&= \ket{\tilde{\nu}_1^{A|BC}} + \ket{\tilde{\nu}_2^{B|AC}}.
\end{align*}
Now, we can say that the state $\ket \psi_{ABC}$ admits an orthogonal decomposition if and only if there is a $\ket{c}_C$ such that $\langle \tilde{\nu}_1|\tilde{\nu}_2 \rangle=0$, where we ommit the superscript for better readability. Then, we have 
\begin{align}
    \langle \tilde{\nu}_1|\tilde{\nu}_2 \rangle &= \bra b \langle \phi_2 | a \rangle \ket \phi_1 + \langle b | b\rangle_B\langle \phi_2 | ac\rangle_{AC}  - \langle a |a\rangle_A \langle b c| \phi_1 \rangle_{BC} - \langle abc | abc \rangle  .
\end{align}
We can factorize this expression as (using $\mathcal{N} = \langle ab|ab\rangle$),
\begin{align}
     \langle \tilde{\nu}_1|\tilde{\nu}_2 \rangle &= \left(-\sqrt{\mathcal{N}}\bra{c}_C + \frac{1}{\sqrt{\mathcal{N}}}\bra{b}_B\bra{\phi_2}_{AC}\ket{ab}_{AB}\right)\left(\sqrt{\mathcal{N}}\ket{c}_C + \frac{1}{\sqrt{\mathcal{N}}}\langle a | a \rangle_A\bra{b}_B\ket{\phi_1}_{BC}\right).
\end{align}
As stated previously, $\ket \psi_{ABC}$ admits an orthogonal decomposition if and only if there is a $\ket{c}_C$ such that $\langle \tilde{\nu}_1|\tilde{\nu}_2 \rangle=0$. 
If already $\langle \nu_1|\nu_2 \rangle=0$ then $|c\rangle _C = 0$ . Otherwise, there are two solutions for $|c\rangle _C $ that allow transforming any non-orthogonal partition-rank-2 decomposition into an orthogonal one,
\begin{align}
    |c\rangle _C = \frac{1}{\mathcal N}\langle b|b\rangle_B\langle a |_A |\phi_2 \rangle_{AC} \qquad \mathrm{or} \qquad |c\rangle _C = -\frac{1}{\mathcal N}\langle a|a\rangle_B\langle b |_B |\phi_2 \rangle_{BC}
\end{align}
\end{proof}

We can now prove Observation 2 from the main text.\\

\textbf{Observation 2.} \textit{For three-particle states, the geometric measure of partition rank two is maximized by $\ket{\psi}$ if and only if $\ket{\psi}$ is absolutely maximally entangled. We then have $\Omega^2 = \sup_{\eta\in\mathbb{P}_2}|\braket{\eta}{\psi}|^2 = \nicefrac{2}{d}$.}

\begin{proof}
First, note that the overlap with all states $\ket{\eta}$ of Schmidt rank two, is given by the sum of the squared largest two Schmidt coefficients. It can therefore not be smaller than $\nicefrac{2}{d}$ and this value is only reached if the largest two Schmidt coefficients are given by $\nicefrac{1}{\sqrt{d}}$, which is the case only for absolutely maximally entangled states.

We now have to show that the overlap with states of partition rank two, but not Schmidt rank two cannot exceed the bound $\nicefrac{2}{d}$. Due to the symmetry of the state family, we can assume without loss of generality that the closest partition rank two state is of the form $\ket{\eta} = \ket{\nu_1^{A|BC}}+\ket{\nu_1^{B|AC}} =\alpha \ket{\varphi_1^{A|BC}}+\beta\ket{\varphi_2^{B|AC}}$, with 
$\ket{\varphi_1^{A|BC}}$ and $\ket{\varphi_2^{B|AC}}$ normalized. Consider the maximization
    \begin{align}
        &\sup_{\alpha,\beta,\ket{\varphi_1^{A|BC}},\ket{\varphi_2^{B|AC}}}|\bra{\psi}(\alpha \ket{\varphi_1^{A|BC}}+\beta\ket{\varphi_2^{B|AC}})|^2 \\
        &\leq \sup_{\alpha,\beta,\ket{\varphi_1^{A|BC}},\ket{\varphi_2^{B|AC}}}\left(\frac{|\alpha|^2}{d}+\frac{|\beta|^2}{d}+2\Re(\alpha^*\beta\braket{\varphi_1^{A|BC}}{\psi}\braket{\psi}{\varphi_2^{B|AC}})\right)\\
        &\leq\sup_{\alpha,\beta,\ket{\varphi_1^{A|BC}},\ket{\varphi_2^{B|AC}}}\left(\frac{1}{d}\left(|\alpha|^2+|\beta|^2+2|\alpha||\beta|\right)\right) = \sup_{\alpha,\beta}\frac{1}{d}(|\alpha| +|\beta|)^2.
    \end{align}
    In the first step we used that $\ket{\psi}$ is AME, so the largest overlap with all biseparable states is $\nicefrac{1}{d}$. In the second step, we used 
    that $\Re(ab)\leq|a||b|$ and the bound $|\braket{\psi}{\varphi_1^{A|BC}}|\leq\nicefrac{1}{\sqrt{d}}$ that holds for AME states.

    Now, we only have to maximize the expression $(|\alpha|+|\beta|)^2$ under the constraint that the state $\ket{\eta} = \alpha \ket{\varphi_1^{A|BC}}+\beta\ket{\varphi_2^{B|AC}}$ is normalized. By Lemma~\ref{th:orthon}, we can assume that $\ket{\eta}$ is in an orthonormal decomposition and therefore the normalization condition is given by $|\alpha |^2 +|\beta|^2=1$. Hence, the expression is maximized when $\alpha=\nicefrac{1}{\sqrt{2}}=\beta$ reaching the value 2.
    This yields the total overlap $\Omega^2 = \sup_{\eta\in\mathbb{P}_2}|\braket{\eta}{\psi}|^2 = \nicefrac{2}{d}$, which proves the observation.
 
\end{proof}

\section{Analytical bounds on the geometric measure of slice rank two}
\label{app:slice}
We note that since the proof of Lemma~\ref{th:orthon} is independent of the dimension, the statement generalizes to any number of parties. Note again, that this statement is already included in Observation~\ref{obs:orth} and we here just give an additional proof for completeness. This allows us to give bounds on the geometric measure of slice rank two.

\begin{cor}
    Every slice rank two state has an orthogonal decomposition.
\end{cor}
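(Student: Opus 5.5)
The plan is to repeat the argument of Lemma~\ref{th:orthon} for $N$ parties. The only ingredient that lemma used was that each of the two terms is separable across a cut of the form ``one party versus the rest''. In the slice-rank setting this is exactly the situation we have. Take a slice rank two state $\ket{\psi}$ on parties $A_1,\dots,A_N$.

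\emph{Case 1: both terms are separable for the same single-party cut $A_j|\overline{A_j}$.} Then $\ket{\psi}$ has Schmidt rank at most two across that bipartition. The Schmidt decomposition across $A_j|\overline{A_j}$ directly gives two orthogonal terms, each still separable for that cut.

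\emph{Case 2: the cuts differ.} Relabel so that
\begin{align}
\ket{\psi}=\ket{a}_{A_1}\ket{\phi_1}_{\overline{A_1}}+\ket{b}_{A_2}\ket{\phi_2}_{\overline{A_2}} .
\end{align}
Introduce a free vector $\ket{c}$ on the remaining parties $R=A_3\cdots A_N$. Add and subtract $\ket{a}\ket{b}\ket{c}$ to obtain
\begin{align}
\ket{\tilde\nu_1}=\ket{a}\big(\ket{\phi_1}+\ket{b}\ket{c}\big),\qquad
\ket{\tilde\nu_2}=\ket{b}\big(\ket{\phi_2}-\ket{a}\ket{c}\big).
\end{align}
These are still separable for $A_1|\mathrm{rest}$ and $A_2|\mathrm{rest}$, respectively. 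The inner product $\langle\tilde\nu_1|\tilde\nu_2\rangle$ expands into four terms, exactly as in the three-party computation. The only change is that $\ket{c}$ now lives in $\mathcal{H}_R$ instead of $\mathcal{H}_C$, and contractions over $C$ become contractions over $R$.

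Next I would factorize this inner product as in Lemma~\ref{th:orthon}. With $\mathcal N=\langle ab|ab\rangle$, it should be a product of an antilinear factor in $\bra{c}$ and a linear factor in $\ket{c}$:
\begin{align}
\Big(-\sqrt{\mathcal N}\bra{c}+\tfrac{1}{\sqrt{\mathcal N}}\bra{b}\bra{\phi_2}\ket{ab}\Big)\Big(\sqrt{\mathcal N}\ket{c}+\tfrac{1}{\sqrt{\mathcal N}}\langle a|a\rangle\bra{b}\ket{\phi_1}\Big).
\end{align}
Here $\bra{b}\ket{\phi_1}$ is a vector on $R$ obtained by contracting $A_2$, and $\bra{\phi_2}\ket{ab}$ is a dual vector on $R$ obtained by contracting $A_1$. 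I then choose $\ket{c}$ so that one factor vanishes, namely $\ket{c}=-\langle a|a\rangle\bra{b}\ket{\phi_1}/\mathcal N$ or the analogous choice from the other factor. This makes $\langle\tilde\nu_1|\tilde\nu_2\rangle=0$.

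The main obstacle is checking that the factorization really holds, i.e.\ that the four-term expansion equals this product. Expanding the product reproduces the $-\mathcal N\langle c|c\rangle$ term and the two terms linear in $\ket{c}$ and in $\bra{c}$. It also produces one cross term, $\tfrac{\langle a|a\rangle}{\mathcal N}\bra{b}\bra{\phi_2}\ket{ab}\bra{b}\ket{\phi_1}$, which must be compared with $\bra{b}\langle\phi_2|a\rangle\ket{\phi_1}$ in the original expansion. Because the triple-overlap term need not match in general, I would not rely on the factorization alone. Instead I would solve $\langle\tilde\nu_1|\tilde\nu_2\rangle=0$ directly. As a function of $\ket{c}\in\mathcal{H}_R$, this inner product has the form
\begin{align}
-\mathcal N\langle c|c\rangle+\langle u|c\rangle+\langle c|w\rangle+\kappa ,
\end{align}
for suitable vectors $u,w\in\mathcal H_R$ and a constant $\kappa$. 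Such an equation always has a solution whenever $\dim\mathcal H_R\ge1$. For example, restrict to $\ket{c}=t\ket{e}$ with $t\in\mathbb C$ and a suitable unit vector $\ket{e}$. This gives a single complex equation $-\mathcal N|t|^2+\alpha t+\beta\bar t+\kappa=0$. The $-\mathcal N|t|^2$ term makes the left-hand side tend to $-\infty$ as $|t|\to\infty$, while the linear terms shift the zero set. So I would argue, via the real and imaginary parts, that the zero set is nonempty: a circle-type curve intersected with a line. This existence argument is the step I expect to need the most care.
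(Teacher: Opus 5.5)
Your construction is the paper's own. Its proof of the corollary reruns Lemma~\ref{th:orthon} with the auxiliary vector $\ket{c}$ placed on $A_3\cdots A_N$. Your Case~1, the add-and-subtract step, and the choice $\ket{c}=-\langle a|a\rangle\bra{b}\ket{\phi_1}/\mathcal N$ all agree with it. The problem is that you then discard the factorization, which is correct, and rest the argument on an existence claim that is false.

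The cross term you were unsure about matches exactly. Because $\ket{\phi_2}$ lives on $A_1R$ and not on $A_2$, one has $\bra{b}\bra{\phi_2}\ket{ab}=\langle b|b\rangle\,\bra{\phi_2}\ket{a}$. Together with $\mathcal N=\langle a|a\rangle\langle b|b\rangle$ this gives $\frac{\langle a|a\rangle}{\mathcal N}\bra{b}\bra{\phi_2}\ket{ab}\bra{b}\ket{\phi_1}=\bra{\phi_2}\ket{a}\,\bra{b}\ket{\phi_1}$. That is the same contraction over $R$ as $\bra{b}\langle\phi_2|a\rangle\ket{\phi_1}$. In your notation, this means the constant is not free: $\kappa=-\langle u|w\rangle/\mathcal N$. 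This identity is exactly what gives $-\mathcal N\langle c|c\rangle+\langle u|c\rangle+\langle c|w\rangle+\kappa=-\mathcal N\,\langle c-u/\mathcal N\,|\,c-w/\mathcal N\rangle$. The zeros are $\ket{c}=\ket{u}/\mathcal N$ and $\ket{c}=\ket{w}/\mathcal N$, and one of them is the vector you wrote down.

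The fallback, by contrast, does not work. For general data, an equation $-\mathcal N\langle c|c\rangle+\langle u|c\rangle+\langle c|w\rangle+\kappa=0$ need not have any solution, whatever the dimension of $\mathcal H_R$. With $u=w=0$ and $\kappa=-1$ the left-hand side is strictly negative. With $u=w=0$ and $\kappa=i$ its imaginary part is identically $1$. Geometrically, the real part defines a circle that may be empty, and the imaginary part defines a line that may miss it. The fact that the real part tends to $-\infty$ says nothing about either, and restricting to $\ket{c}=t\ket{e}$ does not change this.

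Such data cannot occur in your setting only because of $\kappa=-\langle u|w\rangle/\mathcal N$, i.e.\ because of the factorization you set aside. So drop the existence argument and keep the factorization. Add the trivial remark that if $\ket{a}=0$ or $\ket{b}=0$, one term vanishes and orthogonality is automatic. With that, the proof is complete.
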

\begin{proof}
    This follows directly from the proof of Lemma~\ref{th:orthon} and the fact that it is independent of the dimension. Consider
    \begin{align}
        \ket{\psi}_{A_1\cdots A_N} &= \ket{a}_{A_1}\ket{\chi_1}_{A_2\cdots A_N}+\ket{b}_{A_2}\ket{\chi_2}_{A_1A_3\cdots A_N}  \nonumber \\
        &= \ket{a}_{A_1}\ket{\chi_1}_{A_2\cdots A_N}+\ket{b}_{A_2}\ket{\chi_2}_{A_1A_3\cdots A_N} + \ket{a}_{A_1}\ket{b}_{A_2}\ket{c}_{A_3\cdots A_N}-\ket{a}_{A_1}\ket{b}_{A_2}\ket{c}_{A_3\cdots A_N}
    \end{align}
    we can now interpret $\ket{c}_{A_3\cdots A_N}$ as a single party, living on a larger system and perfom the orthogonalization in analogy to the proof of Lemma~\ref{th:orthon}.
\end{proof}
Then we can continue computing a bound on the overlap $\Omega_*^2 = \sup_{\kappa\in \mathbb{S}_2}|\braket{\kappa}{\psi}|^2$, where the index $*$ denotes that we maximize over slice rank two states and not partition rank two states. We compute
    \begin{align}
        &\sup_{\alpha,\beta,\ket{\varphi_1^{A_1|A_2\cdots A_N}},\ket{\varphi_2^{A_2|A_1A_3\cdots A_N}}}|\bra{\psi}(\alpha \ket{\varphi_1^{A_1|A_2\cdots A_N}}+\beta\ket{\varphi_2^{A_2|A_1A_3\cdots A_N}})|^2 \\
        &\leq \sup_{\alpha,\beta,\ket{\varphi_1^{A_1|A_2\cdots A_N}},\ket{\varphi_2^{A_2|A_1A_3\cdots A_N}}}\left(s_{A_1}^2|\alpha|^2+s_{A_2}^2|\beta|^2+2\Re(\alpha^*\beta\braket{\varphi_1^{A_1|A_2\cdots A_N}}{\psi}\braket{\psi}{\varphi_2^{A_2|A_1A_3\cdots A_N}})\right)\\
        &\leq\sup_{\alpha,\beta}\left(s_{A_1}^2|\alpha|^2+s_{A_2}^2|\beta|^2+2|\alpha||\beta|s_{A_1}s_{A_2}\right) 
        = \sup_{\theta}\left(s_{A_1}^2\sin^2(\theta)+s_{A_2}^2\cos^2(\theta)+2\sin(\theta)\cos(\theta)s_{A_1}s_{A_2}\right).
    \end{align}
The steps are in analogy to the AME case in Appendix~\ref{app:AME}. In the more general case we consider here, we obtain a dependency on $s_{A_i}$, which are the largest Schmidt coefficients of $\ket{\psi}$ taken with respect to the bipartition $A_i|A_1\cdots A_{i-1}A_{i+1}\cdots A_N$. The last equation then gives an upper bound for the overlap with slice rank two states. 

In Fig.~\ref{fig:anabound} we plot the non-trivial bounds as a function of the Schmidt coefficients $s_{A_i}$ and $s_{A_j}$. Note, that the bounds themselves do not depend on the local dimension, but the possible combinations of Schmidt coefficients grows with the dimension. 
\begin{figure}[htbp]
    \centering
    \includegraphics[width=0.5\linewidth]{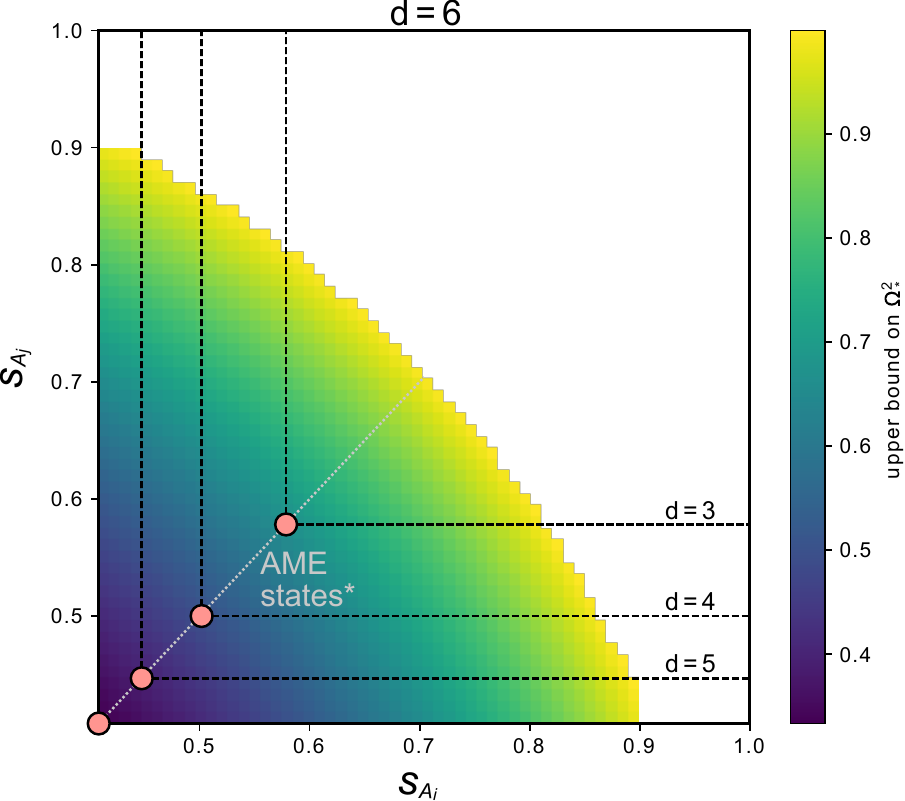}
    \caption{Analytical bound for the overlap with slice rank two states depending only on the Schmidt coefficients. The coefficient $s_{A_i}$ is the largest Schmidt coefficient of $\ket{\psi}$ when considering the cut $A_i|$rest and analog for $A_j$. This plot was done for local dimension $d=6$, restricting the possible values for the Schmidt coefficients; the black dashed lines show the smaller dimensions $d\in\{3,4,5\}$. In the three particle case the orange points correspond to AME states. 
    For the white area, the upper bound is trivial.}
    \label{fig:anabound}
\end{figure}

\section{Extensions to tensor rank}
\label{app:borderrank}
First, recall that a state that can be decomposed as $\ket{\tau}=\sum_{i=1}^{k}\ket{t_i}$, with no less than $k$ fully separable terms $\ket{t_i}$,
has tensor rank $k$. For a state $\ket{\psi}$, that fulfills
\begin{align}
    \sup_{\ket{\tau}\in \mathbb{T}_k}|\langle\psi |\tau\rangle|^2=1,
\end{align}
where the supremum is taken over all tensor rank $k$ states, we say that it has \textit{border rank} $k$. Note that tensor rank and border rank do not necessarily coincide.
A simple example is given by the W state $\ket{W}\propto\ket{010}+\ket{001}+\ket{100}$, which has tensor rank 3, but border rank 2~\cite{deSilvaillposed,KlinglerNetzerDelesCoves2025border}.
 
Let us now show that our SDP and see-saw approach extend to the tensor rank, allowing systematic derivations of tensor rank decompositions and provable upper and lower bounds on the border rank.

The logic is the same as above: Assume we want to approximate a state $\ket{\psi}$ with a state $\ket{\tau}$ of tensor rank 2:
\begin{align}
     \ket{\tau} = \ket{\varphi^A_1}_A\ket{\varphi^B_1}_B\ket{\varphi^C_1}_C+\ket{\varphi_2^A}_A\ket{\varphi_2^B}_B\ket{\varphi_2^C}_C
\end{align}
As in the partition rank case, we collect the states corresponding to specific subsystems in a vector by taking the direct sum. Here we define $\ket{a} = \ket{\varphi_1^A}_A \oplus \ket{\varphi_2^A}_A$, $\ket{b} = \ket{\varphi_1^B}_B \oplus\ket{\varphi_2^B}_B$ and $\ket{c} = \ket{\varphi_1^C}_C \oplus \ket{\varphi_2^C}_C$ and rewrite:
\begin{align}
    \sup_{\ket{\tau}\in \mathbb{T}_2} |\braket{\psi}{\tau}|^2 \mapsto \sup_{\ket{abc}}|\braket{\Psi}{abc}|^2,\,\,\text{s. th. } \bra{abc}E^\dagger_TE_T\ket{abc} = 1,
\end{align}
where $\bra{\Psi} = \bra{\psi}E_T$ and $E_T$ is an operator lifting $\ket{\tau}$ into the direct sum subspace: $\ket{\tau} = E_T\ket{abc}$.
Then we can continue as in the partition rank case and perform a see-saw or solve an SDP relaxation.

There are some comments in order: First, note that now we have to compute the overlap with a multipartite fully separable state instead of a bipartite one. The methods still work in an analogous way, but for the see-saw we now have to perform three updates, $\ket{a}$, $\ket{b}$ and $\ket{c}$. Moreover, for the SDP, the PPT relaxation has to be implemented by setting $\ket{abc}\bra{abc}\mapsto \sigma$, where $\sigma$ has a positive partial transpose with respect to all subsystems. 

Second, the number of particles in the fully separable state scales with the number of particles of the state $\ket{\psi}$, as for each system $X$, we define a new vector $\ket{x}$. The dimension of these vectors scales with the local dimension of the state $\ket{\psi}$ multiplied by the tensor rank, since the tensor rank of $\ket{\tau}$ gives the number of terms in the direct sum. 

Third, note that although in the end we optimize over a product state, we can still only make conclusions on the border rank, not on the tensor rank. Indeed, computing the overlap $\sup_{\ket{\tau}\in \mathbb{T}_2}|\braket{W}{\tau}|^2$ we see that the see-saw algorithm slowly converges until it reaches a one in machine precision.

Lastly, we can give upper and lower bounds on the border rank and therefore also lower bounds on the tensor rank. The logic here is the following: The SDP gives a provable upper bound on the overlap $\sup_{\ket{\tau}\in \mathbb{T}_k}|\braket{\psi}{\tau}|^2$. This means that if we obtain an overlap smaller than one for some tensor rank $k_1$, and an overlap one for $k_2 = k_1+1$, we know that the state $\ket{\psi}$ must have at least border rank $k_1+1$. For the see-saw, we know that if the overlap is smaller than one for some $k_1'$ and one for $k_2'=k_1'+1$ the upper bound on the border rank is $k_2'$.

We can use this to verify the border rank of the matrix multiplication tensor, which for $n\times n$ matrices is given by \cite{chitambar2008tripartite, ChristandlZuiddam2019matmultensor, ChristandlLysikovSteffanWernerWitteveen2024matmultensor}:
\begin{align}
    \ket{\psi_n} = \sum_{ijk=0}^{n-1} \ket{ij}_A\otimes\ket{jk}_B\otimes \ket{ki}_C.
\end{align}
For $n=2$ this can be interpreted as a $4\times4\times4$-system and in this case it is known that it has tensor rank and border rank equal to 7 \cite{Strassen1969matmul, Landsberg2006bordermatmul}. We verify this result with our see-saw algorithm, providing a systematic method for the identification of tensor rank decompositions.

We want to test the limits of our see-saw algorithm and go further by defining $\ket{\psi_{n,n,l}}$ as above, but now the sum over $i$ and $j$ runs until $n-1$, whereas the sum over $k$ runs until $l-1$. This is the matrix multiplication tensor for multiplying $n\times n$ matrices with $n\times l$ matrices~\cite{Landsbergmatmul}.

Choosing $n=2$ and $l=3$, the state $\ket{\psi_{2,2,3}}$ becomes a $4\times 6 \times 6$-tensor and it is known that it has border rank 10 \cite{borderrank10}. When approximating $\ket{\psi_{n,n,l}}$ with a state $\ket{\tau}$ of tensor rank 10, the algorithm reaches an overlap of 0.9998 until the numerical error becomes to large to ensure the normalization of the approximated state $\ket{\tau}$. {Since we have to compute a pseudoinverse in each step, a numerical error is collected and shows up in the normalization of $\ket{\tau}$, we therefore stop the algorithm when $|\braket{\tau}{\tau} - 1| \geq 10^{-8}$}.

To our best knowledge, the first case where the exact border rank is unknown is for $n=2$ and $l = 4$. Here $\ket{\psi_{2,2,4}}$ is a $4\times 8\times 8$-tensor and the border rank lies between 12 and 13~\cite{landsberg2015geometryborderrankalgorithms}. Since the see-saw algorithm gives an upper bound, finding an approximation with a state $\ket{\tau}$ of tensor rank 12 would verify border rank 12 for this matrix multiplication tensor. Our algorithm reaches an overlap of 0.8749 with a tensor rank 12 state after 30000 iterations and an overlap of 0.9959 with a tensor rank 13 state, which indicates (but does not prove) that the actual border rank might be 13 and not 12.

\twocolumngrid

\bibliography{refs}

\end{document}